\documentclass{aamas2010_cameraReady}

\usepackage{times}
\usepackage{dsfont}
\usepackage[all]{xy}
\usepackage{url}

% make lists small

\newtheorem{theorem}{Theorem}
\newtheorem{lemma}[theorem]{Lemma}

\newtheorem{corollary}[theorem]{Corollary}
\newtheorem{definition}[theorem]{Definition}
\newtheorem{example}[theorem]{Example}

% uncomment below to get the full version
%\newcommand{\fullver}[1]{#1}

\providecommand{\fullver}[1]{}
\fullver{\newcommand{\shortver}[1]{}}
\providecommand{\shortver}[1]{#1}

%i's neighborhood
\newcommand{\nei}{\mathcal{N}}

\newcommand{\indeg}{\mathcal{I}}

\newcommand{\puregg}{\mbox{PURE-GG}}
\newcommand{\ppuregg}{\mbox{p-PURE-GG}}
\newcommand{\purechg}{\mbox{PURE-CHG}}
\newcommand{\ppurechg}{\mbox{p-PURE-CHG}}
\newcommand{\pureugg}{\mbox{PURE-UGG}}
\newcommand{\ppureugg}{\mbox{p-PURE-UGG}}

\newcommand{\hyper}{\mathcal{H}}
\newcommand{\dig}{\mathcal{D}}
\newcommand{\pri}{\textit{pri}}

\newcommand{\red}{\textit{red}}

\newcommand{\hide}[1]{}

\providecommand{\note}[1]{~\\\frame{\begin{minipage}[c]{\columnwidth}\vspace{2pt}\center{#1}\vspace{2pt}\end{minipage}}\vspace{3pt}\\}

\begin{document}
% In the original styles from ACM, you would have needed to
% add meta-info here. This is not necessary for AAMAS 2010 as
% the complete copyright information is generated by the cls-files.

% For appropriate information about authors and title written in the
% copyright information, you must use these commands. Note that the copyright
% box is not required for the initial submission.

\AuthorsForCitationInfo{Albert Xin Jiang and MohammadAli Safari}

\TitleForCitationInfo{Pure Nash Equilibria: Complete Characterization of
Hard and Easy Graphical Games}

\title{Pure Nash Equilibria: Complete Characterization of \\
Hard and Easy Graphical Games}

%\author{Tracking Number: 742}

\numberofauthors{2}
\author{
% You can go ahead and credit any number of authors here,
% e.g. one 'row of three' or two rows (consisting of one row of three
% and a second row of one, two or three).
%
% The command \alignauthor (no curly braces needed) should
% precede each author name, affiliation/snail-mail address and
% e-mail address. Additionally, tag each line of
% affiliation/address with \affaddr, and tag the
% e-mail address with \email.
\alignauthor
Albert Xin Jiang\\
       \affaddr{Department of Computer Science}\\
       \affaddr{University of British Columbia}\\
       \affaddr{Vancouver, Canada}\\
       \email{jiang@cs.ubc.ca}
\alignauthor
MohammadAli Safari\\
       \affaddr{Department of Computing Engineering}\\
       \affaddr{Sharif University of Technology}\\
       \affaddr{Tehran, Iran}\\
       \email{msafari@sharif.edu}
}

%\author{Albert Xin Jiang}
%\address{A.~X.~Jiang: Department of Computer Science,
%University of British Columbia,
%Vancouver, Canada.}
%\email{jiang@cs.ubc.ca}

%\author{MohammadAli Safari}
%\address{MohammadAli Safari: Department of Computing Engineering,
%Sharif University of Technology, Tehran, Iran.}
%\email{msafari@sharif.edu}

%\hide{
%\author{
%\begin{tabular}{cc}
%\begin{tabular}{c}
%Albert~Xin~Jiang  \\
%Department of Computer Science \\
%University of British Columbia \\
%Vancouver, BC. \\
%\url{jiang@cs.ubc.ca}
%\end{tabular}
%&
%\begin{tabular}{c}
%MohammadAli Safari \\
%Department of Computing Science \\
%University of Alberta \\
%Edmonton,  AB \\
%\url{msafarig@cs.ualberta.ca}
%\end{tabular}
%\end{tabular}
%}
%}

\maketitle

\begin{abstract}
We consider the computational complexity of pure Nash equilibria in graphical games.
It is known that the problem is NP-complete in general, but tractable (i.e., in P) for
special classes of graphs such as those with bounded treewidth.
It is then natural to ask: is it possible to characterize all tractable classes of graphs for this problem?
In this work, we provide such a characterization for the case of bounded in-degree graphs,
thereby resolving the gap between existing hardness and tractability results.
In particular, we analyze the complexity of $\puregg(C, -)$, the problem of deciding the existence of pure Nash equilibria in graphical games whose underlying graphs are restricted to class $C$.
We prove that, under reasonable complexity theoretic assumptions, for every recursively enumerable class $C$ of  directed graphs with bounded in-degree, $\puregg(C, -)$ is in polynomial time if and only if
the reduced graphs (the graphs resulting from iterated removal of sinks) of $C$ have bounded treewidth.
We also give a characterization for $\purechg(C,-)$, the problem of deciding the existence of pure Nash equilibria in
colored hypergraphical games, a game representation that can express the additional structure that some of the players
have identical local utility functions.
We show that the tractable classes of bounded-arity colored hypergraphical games are precisely those whose reduced graphs have bounded
treewidth modulo homomorphic equivalence.
Our proofs make novel use of Grohe's characterization of the complexity of homomorphism problems.
\end{abstract}

\category{J.4}{Social and Behavioral Sciences}{Economics}
\terms{Theory, Economics, Algorithms}
\keywords{game theory, graphical games, computational complexity, homomorphism problem, treewidth}

\section{Introduction}
%\note{I've tried to provide more motivation for our problem, give a more detailed summary of our results and approaches in the introduction}
There has been recent interest in using game theory to model and analyze large multi-agent systems
such as network routing, peer-to-peer file sharing, auctions and other market mechanisms.
%Many such games do not have closed form solutions for their solution concepts such as Nash equilibria, thus requiring us to find such solutions computationally.
One fundamental class of computational problems in game theory is the computation of \emph{solution concepts} of a finite game, such as Nash equilibria.
These kinds of computational tasks can be understood in the language of AI as \emph{reasoning} about the game: what are the likely outcomes of the game, under certain models of rationality of the agents? The goal is to be able to efficiently carry out such reasoning for real-world systems.

%nf: exponential
Much of the existing game theoretic literature models
simultaneous action games using the normal
form (also known as the strategic form), i.e. a game's payoff function is represented as a matrix with one entry
for each player's payoff each combination of the actions of all players.
The size of this representation grows exponentially in the number of players. Computations that are ``polynomial-time'' in the input size are nevertheless impractical. As a result the normal form is unsuitable for
representing large systems.
%\fullver{Although several computational tasks such as finding pure Nash equilibria and computing expected payoff under mixed strategies are polynomial time in the size of the normal form representation, they are intractable for large games because of the exponential size of the representation. }%end fullver

%compact reps
Fortunately, most real-world large games have highly-structured utility functions, which allows them to be represented compactly.
A line of research thus exists looking for \emph{compact game representations} that are able to succinctly describe structured games, and efficient algorithms for computing solution concepts that run in time polynomial  in the size of the representation.
An influential compact representation of games is \emph{graphical games} proposed by Kearns \emph{et al.}                                 \cite{graphical}.
A graphical game is associated with a graph whose nodes correspond to the players of the game and
edges correspond to payoff influence between players.
In other words, each player's payoffs depend only on the actions of himself and his neighbors in the graph.
%If the graph is sparse, each payoff function can be represented by a low-dimensional table.
The representation size of a graphical game is exponential in the size of its largest neighborhood. This can be exponentially smaller than
the normal form representation of the same game, especially for sparse graphs.

%other graphical rep: MAID; AGG
%Other notable graphical representations of games include
%Multi-agent influence diagrams (MAIDs) \cite{maid},
%which is an extension of influence diagrams to the multi-agent setting.
%Action graph games (AGGs) \cite{ActionGraph,JiangKLB06AGG} are a graphical representation that exploits the anonymity and context-specific independence structure of games.
%It is based on the action graph, a graph with actions as nodes. A player's utility for playing some action depends on the numbers of players that chose neighboring actions.

A compact game representation is not very useful if we cannot perform computations that are efficient relative to its size.
\fullver{
The classic computational game theory problem is to find a mixed-strategy Nash equilibrium, which always exist.
For graphical games on tree graphs, \cite{graphical} proposed a message-passing algorithm
(similar to probabilistic inference algorithms on graphical models) that computes approximate Nash equilibria in polynomial time.
%Elkind (path)
When the graph is a path, exact Nash equilibria can be found in polynomial time \cite{Elkind06}.

}%end fullver
\hide{Graphical games have also been instrumental in the recent proofs \cite{GoldbergPapa06,Daskalakis05} that eventually established the PPAD-completeness of the problem of finding a mixed-strategy Nash equilibrium in a normal form game of at least two players \cite{ChenDeng06}.
}
%PS NE
In this paper we focus on the problem of computing pure-strategy Nash equilibria (PSNE).
Unlike mixed-strategy Nash equilibria, which are guaranteed to exist for finite games \cite{Nash1951},
    in general pure Nash equilibria are not guaranteed to exist.
     Nevertheless, in many ways pure Nash equilibrium is a more attractive solution concept than mixed-strategy Nash equilibrium. First, pure Nash equilibrium can be easier to justify because it does not require the players to randomize. Second, it can be easier to analyze because of its discrete nature (see, e.g., \cite{Brandt}).
%related work on pure NE on GG
Gottlob \emph{et al.} \cite{Gottlob03} were the first to
analyze the problem of computing pure Nash equilibria in graphical games. They proved that the problem is NP-complete in general, even when the graphs have neighborhood size at most 3. On the other hand, for games with graphs of bounded hypertree-width there exists a dynamic-programming algorithm that determines the
existence of pure Nash equilibria in polynomial time in the size of the representation.
Daskalakis and Papadimitriou \cite{Daskalakis06} reduced the problem of finding pure strategy Nash equilibrium in graphical games to a Markov Random Field (MRF),
and then applied the standard clique tree algorithm to the resulting MRF. Among their results they showed that
for graphical games on graphs with log-sized treewidth, bounded neighborhood size and bounded number of actions per player,
deciding the existence of pure Nash equilibria is in polynomial time.

\fullver{\note{Maybe move or delete:

There have also been recent development on analyzing the probability of existence of PSNE in graphical games with random payoffs and/or random graph structure \cite{Daskalakis07random,Dilkina07}.
}
}%end fullver
%somewhat related:
%constrained PS

A natural question arises: are there other tractable classes of graphical games? Such a tractable class can be defined by restrictions over the graph structure as well as the local utility functions. In this paper, we analyze the complexity of $\puregg(C, -)$, the problem of determining the existence of pure Nash equilibria in graphical games whose underlying digraphs\footnote{We define graphical games on directed graphs
(whereas Daskalakis and Papadimitriou's  \cite{Daskalakis06} definition is based
on undirected graphs). The definition with directed graphs is more general, as graphical games on undirected graphs
can be thought of as games on directed graphs with bi-directional edges.
Our result applies to undirected graphs as a special case.} are restricted to class $C$ (while other aspects of the game representation can be arbitrary). We say $C$ is \emph{tractable} if $\puregg(C,-)$ is in polynomial time.

Throughout the paper we make the restriction that the graphical games have bounded neighborhood size (i.e.\ bounded in-degree).
%We think bounded-indegree graphical games are the most important kind, since the
%representation size of a graphical game is exponential in the maximum in-degree.
Graphical games with large in-degree
have the same problem as normal form games: even if we find polynomial-time algorithms for them,
that would still be impractical due tb the large input size.

Previous results \cite{Gottlob03,Daskalakis06} do not answer whether bounded tree-width is the sole measure of tractability of $\puregg(C,-)$. For example, it was unknown whether games with
log-sized treewidth and unbounded number of actions per player are tractable.
Furthermore, there are other graph parameters that affect the tractability of certain computational problems on directed graphs, e.g.
directed tree-width \cite{johnson2001directed}, D-width \cite{safari2005}, DAG-width \cite{obdrzalek2006dag}, and Kelly-width \cite{hunter2008digraph}. Since these parameters take advantage of the directionality of the edges, they could potentially give a better characterization of the tractability of $\puregg(C,-)$.
%\note{other examples of graph parameters?}

%Previously, the only known hardness result for subclasses of graphical games is
% Gottlob \emph{et al}'s NP-hardness result for the class of all graphs with neighborhood size at most 3.

%main result.
In this paper we give a  complete characterization of the tractable classes of bounded-indegree graphs, thereby resolving the gap between existing tractability and hardness results. Our results are summarized below.
\begin{enumerate}
\item Bounded-treewidth graphs are \emph{not} the only tractable kind of digraphs.
%\note{briefly discuss reduced graphs}
One example is graphical games on DAGs, for which pure equilibria always exist and can be computed efficiently.
More generally, whenever there is a sink (a vertex with out-degree zero), the utilities for that sink player do not affect the existence of pure equilibria.

\item Given a digraph $G$, let its \emph{reduced graph} %$\red(G)$
be the graph obtained by iterated removal of sinks. We prove that, under reasonable complexity theoretic assumptions, for every recursively enumerable class $C$ of directed graphs with bounded in-degree, $\puregg(C, -)$ is in polynomial time if and only if %$\red(C)$, 
the reduced graphs of $C$ have bounded treewidth.
\fullver{In other words,
if we only restrict the graphical structure, then
 the only tractable classes of graphical games are those with bounded treewidth after iterated removal of sinks.}
\item We consider \emph{colored hypergraphical games}, a new game representation based on colored hypergraphs, which can express the additional structure that some of the players have identical local utility functions. For the pure equilibrium problem on this representation $\purechg(C,-)$, we show that a class of colored hypergraphs is tractable if and only if its reduced graphs have bounded treewidth modulo homomorphic equivalence. This is a wider family of tractable games compared to the graphical game representation.
That is, by incorporating more information about the structure of the game into the graph, we are able to identify new tractable classes of games.
\end{enumerate}
Our results for $\puregg(C, -)$ follow as a corollary to our results for $\purechg(C, -)$.
Another corollary is that if the graphical games are represented as undirected graphs, then the tractable classes of undirected graphs are precisely those with bounded treewidth.

We prove these results by connecting $\puregg(C, -)$ and \\
$\purechg(C,-)$ to \emph{homomorphism problems}, which given colored hypergraphs $G$ and $H$, ask whether there exists a homomorphism from $G$ to $H$.
We then make use of Grohe's \cite{grohe} breakthrough result that characterizes the tractable classes of $\textit{HOM}(\mathcal{C},-)$,
homomorphism problems with restricted left-hand side.
This is (as far as we know) a novel application of Grohe's result for homomorphism problems to computational problems in game theory.
%\note{briefly discuss the proof technique?}
We prove our main tractability result by reducing an arbitrary instance of $\purechg(C,-)$ to an instance of the homomorphism problem. This reduction has a similar structure as \cite{Gottlob03}'s formulation of graphical games as constraint satisfaction problems.
\fullver{By adapting this type of constructions to the case of colored hypergraphical games, we are able to show that $\purechg(C,-)$ is tractable for a wider family of structures
(bounded treewidth modulo homomorphic equivalence) than $\puregg(C,-)$ (bounded treewidth).
%Such games can be solved by transforming to the corresponding homomorphism problems which have known polynomial-time algorithms \cite{dalmau,grohe}.
\note{an example of colored hypergraph game with small modulo-treewidth}
}%end fullver
On the other hand, our proof for our hardness result is quite unlike the existing NP-hardness proof for graphical games \cite{Gottlob03}.
At a high level, this is because the previous approach would construct graphical games on graphs with a certain specific structure. %take a homomorphism problem instance $(G,H)$ and create an
%equivalent graphical game on a graph that is different from $G$ in a specific way.
This is sufficient for proving NP-hardness, but not for our purposes, because we want to characterize
the complexity for $\purechg(C,-)$ for arbitrary $C$, which implies that we had to instead construct our graphical/colored hypergraphical game on an arbitrarily given digraph/colored hypergraph. In other words, we only have control over the utility functions (but not the graph structure), and need to set the utilities such that there is a solution to the given homomorphism problem if and only if the game has a pure equilibrium. This makes our task more technically challenging.
We think our proof techniques may have wider interest; for example, it might be possible to extend these techniques to prove similar results for action-graph games \cite{ActionGraph}, another compact game representation.

%what does this mean for real systems
These results
increase our understanding of the power and limitations of the graphical game representation, and have immediate practical impact.
Specifically, they imply that
if the systems we are interested in have large-treewidth reduced graphs when modeled as graphical games, then the resulting graphical games are unlikely to admit an polynomial-time algorithm for pure Nash equilibria,
\emph{even if the graphs have other types of structure}.
Nevertheless, if some of the players have identical utility functions, we might be able to get around this limitation of graphical games by representing the systems as colored graphical games instead. If the corresponding reduced graphs have bounded treewidth modulo homomorphism equivalence, pure Nash equilibria can be found efficiently by transforming to the corresponding homomorphism problems which have known polynomial-time algorithms
\cite{DBLP:conf/cp/DalmauKV02,grohe}.

%commented out because we only consider the decision problem --albert
%Our tractability result can
%also be straightforwardly applied to finding approximate mixed strategy Nash equilibria, since
%most of the algorithms proposed for solving approximate mixed strategy equilibria on graphical games \cite{graphical,VickreyKoller}
%can be thought of as discretizing the
%mixed-strategy space of the game, treating the discretized mixed strategies as pure strategies, and then solving for pure strategy equilibria
%of the resulting graphical game.

%\note{overview}
\section{Preliminaries}
\subsection{Graphical Games}

%\subsubsection*{Game}
%We first formally define a \emph{game}.
%\begin{definition}
A (simultaneous-move) \emph{game} is a tuple
$(N, \{S_i\}_{i\in N}, \{u_i\}_{i \in N})$
where
%\begin{itemize}
 $N=\{1,\ldots,n\}$ is the set of agents;
 for each agent $i$, $S_i$ is the set of $i$'s actions. $S_i$ is nonempty.
%We denote by $s_i\in S_i$ one of agent $i$'s actions.
An action profile $\mathbf{s} \in \prod_{i\in N}S_i$ is a tuple of actions of the $n$ agents.
 $u_i: \prod_{j\in N} S_j \rightarrow \mathds{R}$
is $i$'s utility function, which specifies $i$'s utility given any action profile.
%\end{itemize}
%\end{definition}

For every action profile $\mathbf{s}$, let $s_i$ be the action of agent $i$ under this action profile, and $s_{-i}$ be the $(n-1)$-tuple of the actions of
agents other than $i$ under this action profile.
For each action $s'_i \in S_i$, let $(s'_i, s_{-i})$  be the action profile where agent $i$ plays $s'_i$ and all the other agents play according to $s_{-i}$.

%\subsubsection*{Game Representations}
A game representation is a data structure that stores all information needed to specify a game.
%nf
\fullver{A \emph{normal form} representation of a game uses a matrix $U_i$ to represent each utility function $u_i$.
The size of this representation is $ n \prod_{j\in N} |S_j|$, which is $O(nm^n)$ where $m= \max_{i\in N} |S_i|$. }%end fullver

%gg

\begin{definition}
A \emph{graphical game} representation is a tuple\\
$(G, \{U_i\}_{i\in N} )$
where
\begin{itemize}
\item $G = (N,E)$ is a directed graph, with the set of vertices corresponding to the set of agents. $E$ is a set of ordered tuples corresponding to
the arcs of the graph,
i.e. $(i,j)\in E$ means there is an arc from $i$ to $j$.
Vertex $j$ is a \emph{neighbor} of $i$ if $(j,i)\in E$.
\item for each $i\in N$, a \emph{local utility function}
$U_i: \prod_{j\in \nei (i) } S_j \rightarrow \mathds{R}$
where $\nei(i)=\{i\}\cup \{ j \in N | (j,i)\in E\}$ is the \emph{neighborhood} of $i$.
\end{itemize}
\end{definition}

Each local utility function $U_i$ is represented as a matrix of size $\prod_{j\in \nei (i) } |S_j|$. Since the size of the local utility functions
dominates the size of the graph $G$, the total size of the representation is $O(n m ^ {(\indeg+1)} )$ where $\indeg$ is the maximum in-degree of $G$ and $m=\max_{j\in N}|S_j|$.
\hide{\begin{fact}
A class of graphical games have polynomial size in $n$ and $m$ if and only if the corresponding class of graphs has bounded in-degree.
\end{fact} }%end hide

A graphical game $(G,\{U_i\})$ specifies a game $(N, \{S_i\}, \{u_i\})$ where
each $S_i$ is specified by the domain of agent $i$ in $U_i$, and
for all $i\in N$ and all action profile $\mathbf{s}$ we have $u_i(\mathbf{s}) \equiv  U_i(s_{\nei(i)} )$, where
$s_{\nei(i)}=(s_j)_{j\in \nei(i)}$.

\subsection{Colored Hypergraphical Games}
We now consider graphical games with a certain additional structure. Specifically,
some players may have identical local utility functions.\footnote{For simplicity of presentation, we assume each player have the same number of actions.
We can convert an arbitrary game to our setting by adding dummy actions. Since we are only focusing on graphical games with
bounded in-degree, this would only increase the representation size by a polynomial factor.}

\fullver{Formally, we say two players $i, j$ have the same local utility function if $|\nei(i)|=|\nei(j)|$, and there exists a bijection
between their neighborhoods $\pi: \nei(i) \rightarrow \nei(j)$, such that for each local strategy profile $s_{\nei(i)}\in  \prod_{k\in \nei (i) }
S_k$, we have
\[
U_i(s_{\nei(i)}) = U_j(s^\pi_{\nei(j)})
\]
where $s^\pi_{\nei(j)}$ is the local strategy profile over $\nei(j)$ where $s^\pi_{\pi(k)} = s_k$ for all $k\in \nei(i)$.
}%end fullver

To represent this kind of  structure, we not only need to specify which players have the same local utility function, we also need to specify \fullver{the
bijections between neighborhoods, or equivalently} an ordering of the vertices in each neighborhood. We express this
kind of structure graphically using colored hypergraphs.

A \emph{colored hypergraph} $H = (V, E,C)$ consists of a set of vertices $V$, a set of edges $E$ where every edge $e\in E$ is
an ordered tuple of vertices in $V$,\footnote{Note that the definition we use is slightly different from the common definition of hypergraphs in which each edge is an unordered set of
vertices.}
and a color function $C: E\rightarrow\tau$ that maps each edge to its color.
In other words, each edge $e\in E$ is labeled with a color $C(e)$.
We denote as $V(H)$, $E(H)$ and $C_H$ the set of vertices, set of edges and
 the color function of colored hypergraph $H$, respectively.

We are now ready to define colored hypergraphical games.
%Intuitively, each player's neighborhood is represented by a colored hyperedge, and players with same-colored hyperedges have the same local utility function.
Intuitively, in a colored hypergraphical game, the players affecting player $i$'s utility are represented as a colored hyperedge consisting of these players' vertices, with $i$ being the first element. If two hyperedges have the same color, it means that their corresponding local utility functions are identical.
\begin{definition}\label{def:chg}
A \emph{colored hypergraphical game} is a tuple \\ $(G,\{U_c\}_{c\in\tau})$, where
\begin{itemize}
\item $G=(N,E,C)$ is a colored hypergraph with the set of colors $\tau$;
\item the set of vertices $V(G)=N$ corresponds to the set of players;
\item for each vertex $v \in N$, there exists exactly one edge $e\in E$ that has $v$ as the first element. Denote this edge as $e_v$.
\item for each color $c\in \tau$, edges of color $c$ have the same arity\footnote{The \emph{arity} of an edge $e$ is its size, i.e. number of elements.}
 $\mathcal{I}_c$.
\item each player has $m$ actions. Let $[m]=\{1,\ldots,m\}$.
\item for each color $c$, $U_c: [m]^{\mathcal{I}_c} \rightarrow R$.
\end{itemize}
\end{definition}

A colored hypergraphical game $(G, \{U_i\})$ specifies a game \\
$(N,\{S_i\},\{u_i\})$ where each $S_i = [m]$ and for each $i\in N$ and each action profile $\mathbf{s}$, $u_i(\mathbf{s})=U_{C(e_i)}(s_{e_i})$.

Unlike graphical games, where given an arbitrary digraph $G$ there is a graphical game on $G$, not all colored hypergraphs have corresponding colored hypergraph games. Let $\Sigma$ be the set of colored hypergraphs of colored hypergraph games. \fullver{From Definition \ref{def:chg}, it is straightforward to see that $\Sigma$ consists of colored hypergraphs $(V,E,C)$ such that for each vertex $v$, there exists exactly one edge $e\in E$ that has $v$ as the first element,
and furthermore edges of the same color have the same arity.
}%end fullver

Given $G\in \Sigma$, we define its \emph{induced digraph} $\dig(G)$ to be
a digraph on the same set of vertices; and for each hyperedge
$(v,v_1,\ldots, v_r)$ in $G$ we create directed edges
$(v_1,v),\ldots,  (v_r,v)$ in $\dig(G)$.

Graphical games can be thought of as special cases of colored hypergraph games where each neighborhood has a different local utility function, i.e. a different color.
Given a directed graph $G=(V,E)$, we define its \emph{induced colored hypergraph} $\hyper(G)=(V,\mathcal{E},C)$ such that its set of colors is $V$ and for each vertex $v\in V$, there is a hyperedge
$e\in \mathcal{E}$ of color $v$, consisting of vertices in $\nei(i)$, with $v$ being the first element in the tuple $e$.
The rest of the vertices in $e$ is sorted in a pre-determined order over $V$. In particular, if the vertices correspond to the agents
$1,\ldots,n$ in a game, we require these vertices to be sorted in ascending order of the agents.
By construction,  $\dig(\hyper(G))=G$ for all digraph $G$.
Given a graphical game $\Gamma=(G,\{U_i\}_{i\in N})$, its \emph{induced colored hypergraphical game} is $\hyper(\Gamma)=(\hyper(G), \{U_i\}_{i\in N})$. It is straightforward to verify that $\Gamma$ and $\hyper(\Gamma)$ represent the same game.

\fullver{When $G$ has bounded in-degree, then the size of individual edges in $\hyper(G)$ are bounded.}
For notational convenience, given a class of directed graphs $\mathcal{C}$, let $\hyper(\mathcal{C})$ be the class of induced hypergraphs of the directed
graphs in $\mathcal{C}$.

There is one graph  often associated with any hypergraph: the {\em
primal}  graph. Given a colored hypergraph $H$, its primal
graph $\pri(H)$ is an undirected graph obtained by making a clique out of the vertices in every edge
in $H$.
\fullver{For example, given a directed graph $G$ shown in Figure \ref{fig:digraph}, its induced colored hypergraph $\hyper(G)$ has the same set of vertices and the
hyperedges $\{A,B\}$, $\{A,B,C\}$, $\{D,E\}$, $\{C,D,E\}$,
$\{F,G\}$, $\{C,F,G\}$, and $\{B,C,D,E\}$. Figure \ref{fig:primal}
shows $\hyper(G)$'s primal graph.
}%end fullver

There are a couple of previous papers on the computational properties of
graphical games with different notions of identical utility functions.
Daskalakis \emph{et al.} \cite{Daskalakis05regular}
analyzed the complexity of finding pure and mixed Nash equilibria of graphical games on highly regular graphs (namely the $d$-dimensional grid), in which all local payoff functions are identical.
%They showed that finding pure Nash equilibria is tractable if $d=1$ and NEXP-complete otherwise.
Brandt \emph{et al.} \cite{DBLP:conf/wine/BrandtFH08}
instead analyzed graphical games on arbitrary graphs, but with several
stronger notions of symmetry.
In contrast, the colored hypergraphical game formulation places the least amount of restrictions and is thus more likely to occur in practice. In fact, these previous formulations can be thought of as special cases of colored hypergraphical games.

%\fullver{
\subsection{Best Response and Pure Nash Equilibrium}
Given a game $(N, \{S_i\}_{i\in N}, \{u_i\}_{i \in N})$ and $s_{-i}$, agent $i$'s \emph{best response} to $s_{-i}$ is
%\[
$BR_i(s_{-i}) = \arg\max_{s_i\in S_i} u_i(s_i,s_{-i})$.
%\]
Since $S_i$ is nonempty,  $i$ has at least one best response given any $s_{-i}$.
Note that in a graphical game, the best response of $i$ depends only on the actions of $i$'s neighbors.
\begin{definition}
An action profile $\mathbf{s} \in \prod_{i\in N}S_i$ is a \emph{pure Nash equilibrium} of the game $(N, \{S_i\}_{i\in N}, \{u_i\}_{i \in N})$
%if for each agent $i\in N$, and for each $s'_i \in S_i$,
%\[
%u_i (\mathbf{s}) \geq u_i(s'_i, s_{-i})
%\]
if each agent $i\in N$ is playing a best response to $s_{-i}$, i.e.
%\[
$s_i \in BR_i(s_{-i})$.
%\]
\end{definition}
%}%end fullver

\fullver{\subsection{Pure Nash Equilibria in Graphical Games} }
We define $\puregg$ to be the following decision problem: given a graphical game $(G, \{U_i\}_{i\in N}) $, decide whether there exists a pure Nash equilibrium.
Gottlob \emph{et al.}\ \cite{Gottlob03} have shown that this problem is NP-complete in general.
Given a class $C$ of digraphs and class $\mathcal{U}$ of local utility functions,  let $\puregg(C, \mathcal{U})$ be the pure Nash equilibrium decision problem
on graphical games when the graphs of the input game are taken only from class $C$ and the local utility functions are taken only from class $\mathcal{U}$.
In this paper we are interested in problems of the form $\puregg(C, -)$, which means that the local utility functions are unconstrained, other than the requirement
that the input is a well-formed graphical game, i.e. that each $U_i$ takes $|\nei(i)|$ arguments.

Similarly, we define the problem $\purechg(C,-)$ to be the pure Nash equilibrium decision problem
on colored hypergraphical games, with colored hypergraphs restricted to class $C$.

\fullver{We restrict our attention to classes of graphs with bounded in-degree. As discussed above, bounded in-degree is
a necessary and sufficient condition for the input graphical games to have polynomial size in $n$ and $m$.
Gottlob \emph{et al.}\ \cite{Gottlob03} showed that $\puregg(C, -)$ is NP-complete if we take $C$ to be the set of all
graphs with in-degree at most 3.
}%end fullver

\subsection{Treewidth}
\shortver{
Due to space constraints we omit the standard definition of treewidth for undirected graphs (see, e.g., \cite{Kloks94}).
The treewidth of a digraph $G$ is the treewidth of the undirected version of $G$. The treewidth of a colored hypergraph is the treewidth of its primal graph.
}
\fullver{One of the most significant recent advances in the field of algorithmics comes from the Graph Minors project of Robertson and Seymour.
In addition to being a major addition to the structure theory of graphs, the tools developed during their work imply algorithmic results
such as every minor-closed graph property can be decided in polynomial time.  The most far-reaching algorithmic contribution is the introduction
of graph decompositions such as tree decompositions and measures such as tree-width, which have helped identify large classes of tractable instances of hard (e.g. NP-complete)
graph problems.
}%end fullver

\fullver{
A {\em tree-decomposition} of an undirected graph $G = (V, E)$ is a
pair $(T, W)$, where $T$ is a tree, and $W$ is a function that assigns
to every node $i$ of $T$ a subset $W_i$ of vertices of $G$ such that
\begin{enumerate}
\item $\bigcup_{i\in T} W_i = V$,
\item For each edge $(u, v)\in E$, there exists some node $i$ of $T$
such that $\{u, v\}\subset W_i$, and
\item For all nodes $i, j, k$ in $T$, if $j$ is on the unique path
from $i$ to $k$ then $W_i \cap W_k \subset W_j$.
\end{enumerate}

The {\em width} of a tree-decomposition $(T, W)$ is the maximum of
$|W_i|-1$ over all nodes $i$ of $T$. The {\em tree-width} of $G$ is
the minimum width over all tree-decompositions of $G$.
Given a digraph $G$, the treewidth of $G$ is the treewidth of the undirected version of $G$.
}%end fullver

\hide{The key to the algorithmic success of tree decompositions is that they are readily extendable to arbitrary relational structures, such as
directed graphs and hypergraphs.  By considering tree
decompositions of the Gaifman (or primal) graph, large classes of tractable instances of hard problems can be found for various structures including hypergraphs
and directed graphs.\footnote{The main drawback of this approach is that often information is lost when considering the Gaifman graph, and this may be crucial.
}
}%end hide

\fullver{\subsection{Treewidth of Hypergraphs}
One question is what do we exactly mean by tree-width of a colored hypergraph.
There are various attempts for measuring the connectivity of hypergraphs. The two famous approaches are considering treewidth of the primal graph and {\em hypertree-width}, which was first introduced by Gottlob et al.~\cite{GLS99J}.
Daskalakis and Papadimitriou \cite{Daskalakis06} showed that when the size of individual edges are bounded both primal treewidth and hypertree-width are within a constant factor of each other. Hence, in this paper, we consider the primal tree-width as the tree-width of colored hypergraphs.
}%end fullver
\fullver{Figure
\ref{fig:treedecomp} shows a tree decomposition $(T,W)$ of the primal graph
 from Figure \ref{fig:primal}. Each node $i\in T$ of the tree is
labeled with $W_i$.
}%end fullver

\fullver{
\begin{figure*}[tb] \center
  \begin{minipage}[b]{.24\textwidth}
  \entrymodifiers={++[o][F-]} \SelectTips{cm}{}

  \centerline{
  \xymatrix @-1pc {
  %*\txt{ } & *\txt{ } & A\ar@(ur,dr)[] \ar@{<->}[d] & *\txt{ } & *\txt{ }\\
  %*\txt{ }& *\txt{ }& B\ar@(ur,dr)[]\ar@{<->}[d]& *\txt{ }& *\txt{ }\\
  *\txt{ }&A \ar@{<->}[r] & B\ar@{<->}[d]& *\txt{ }& *\txt{ }\\
  E\ar@{<->}[r]&D\ar@{<->}[r]&C\ar@{<->}[r]&F\ar@{<->}[r]&G
  }}

  \caption{A graph $G$.}\label{fig:digraph}
  %\end{figure}
  %\begin{figure}[htb]
  \end{minipage}
\hfill
  \begin{minipage}[b]{.24\textwidth}
  \entrymodifiers={++[o][F-]}
  \centerline{
  \xymatrix @-1pc {
  %*\txt{ } & *\txt{ } & A \ar@{-}[d] \ar@/^1pc/@{-} [dd]& *\txt{ } & *\txt{ }\\
  %*\txt{ }& *\txt{ }& B\ar@{-}[d] \ar@{-}[dr] \ar@{-}[dl]& *\txt{ }& *\txt{ }\\
  *\txt{ }& A \ar@{-}[r] \ar@{-} [dr]& B\ar@{-}[d] \ar@{-}[dr] \ar@{-}[dl]& *\txt{ }& *\txt{ }\\
  E\ar@{-}[r]\ar@{-}@/_1pc/[rr]
  &D\ar@{-}[r]\ar@{-}@/_1pc/[rr]&C\ar@{-}[r]\ar@{-}@/_1pc/[rr]&F\ar@{-}[r]&G
  } } \caption{$\pri(\hyper(G))$.}\label{fig:primal}
  %\end{figure}
  %\begin{figure}[tb]
  \end{minipage}
\hfill
  \begin{minipage}[b]{.34\textwidth}
  \entrymodifiers={+[F-]}

  \centerline{
  \def\objectstyle{\scriptstyle}
  \xymatrix @-1pc { *\txt{ } & {W_1= \{A,B,C\}} \ar@{-}[d] & *\txt{ }\\
  {W_3=\{C,D,E\}} \ar@{-}[r]& {W_2=\{ B,C,D,F\}} \ar@{-}[r] &
  {W_4=\{C,F,G\}} } }

  \caption{Tree decomposition of Figure \ref{fig:primal}.}\label{fig:treedecomp}
  \end{minipage}

\end{figure*}

}%end fullver

\fullver{
The following is a well-known property of tree decompositions.
\begin{lemma}[Kloks \cite{Kloks94}]\label{lemma:kloks}
If $X$ is a clique in an undirected graph $G$, then in any tree decomposition $(T,W)$ of $G$,  $\exists i\in T$ such that
$X\subseteq W_i$.
\end{lemma}
}%end fullver

\fullver{
\section{Homomorphism Problems}
In this section we give a brief introduction of homomorphism, homomorphism problems, and Grohe's characterization of tractable classes of homomorphism problems.
%\subsection{Graph Theory and Tree-width}
}%end fullver

\subsection{Homomorphism}
Let $G$ and $H$ be two colored hypergraphs. %Furthremore, assume that edges in both $G$ and $H$ have colors (not necessarily distinct colors).
A \emph{homomorphism} from $G$ to $H$ is  a mapping $h$ from the vertex set of $G$ to the vertex
set of $H$ that preserves both adjacency and color,
i.e. for every edge $e=(a_1, a_2, \cdots, a_k) \in E_G$, $h(e) = (h(a_1), h(a_2), \cdots, h(a_k)) \in E_H$ and  $C_G(e)=C_H(h(e))$. In a \emph{homomorphism problem}, we are given $G$ and $H$ and
have to decide whether there exists a homomorphism from $G$ to $H$. %Notice that we don't need the mapping function $h$ to be a bijection, however, it obviously cannot map two vertices of $G$ that share an edge to a single vertex in $H$.
%When the colors of edges are not explicitly specified (e.g. when $G $ and $H$ are hypergraphs or graphs), we assume that all edges in both $G$ and $H$ have the same color and, hence, any mapping is required to only preserve the adjacencies.

For two classes $\mathcal{C}$ and $\mathcal{D}$ of colored hypergraphs let $\textit{HOM}(\mathcal{C}, \mathcal{D})$ be the homomorphism problem when the input colored hypergraphs are taken only from classes $\mathcal{C}$ and $\mathcal{D}$.  When an input class is the class of all colored hypergraphs, we use the notation `$-$' instead.

\fullver{
Several well-known NP-complete problems are restrictions of the homomorphism problem. For example, deciding whether an undirected graph $G$ has a clique of size $k$ is equivalent to solving the homomorphism problem instance
%$\textit{HOM}(K_k, G)$
$(K_k,G)$ where $K_k$ is a clique of size $k$.
%Albert: I'm commenting out the following to avoid confusion of vertex-colors of 3-colorability with the edge-colors
%        we're using in this paper.
and deciding whether a graph $G$ is 3-colorable is equivalent to
solving the homomorphism instance
%$\textit{HOM}(G, H)$
$(G,K_3)$.
}%end fullver

\fullver{The homomorphism  problem is a fundamental problem and is of interest in various areas of computer science. Constraint satisfaction problems in Artificial Intelligence can be phrased as instances of the homomorphism problem\cite{298498}. Both the evaluation and    containment problems for conjunctive database queries are also equivalent to the homomorphism problem\cite{803397}.
}

Two hypergraphs $G$ and $H$ are \emph{homomorphically equivalent} if there is a homomorphism from $G$ to $H$ and vice versa. A class $C$ has bounded treewidth modulo homomorphic equivalence if there exists some constant $k$ such that every hypergraph in $C$ is homomorphically equivalent to a hypergraph with treewidth at most $k$. For example the class of bipartite graphs have bounded treewidth modulo homomorphic equivalence as they are homomorphically equivalent to an edge. We use $\textit{modulo-treewidth}(G)$ to indicate the minimum $k$ for which $G$ is homomorphically equivalent to a hypergraph of treewidth $k$.

\begin{example}\label{ex:modtw}
We describe a class of colored hypergraphical games with bounded treewidth modulo homomorphism equivalence.
Each game has $m^2+2m$ players. There are 4 colors $\{L,R,X,Y\}$.
We have $m$ players %with color $L$,
labeled $l_1\ldots l_m$, and $m$ players %with color $R$,
labeled \\
$r_1,\ldots,r_m$.
For each $i,j\in \{1,\ldots,m\}$ we have a player $x_{ij}$ %of color $X$
and a player $y_{ij}$. %of color $Y$.
For each player $x_{ij}$, we have an hyperedge $(x_{ij}, y_{ij}, l_i, r_j)$ of color $X$; for each player $y_{ij}$, we have an hyperedge $( y_{ij},x_{ij}, l_i, r_j)$ of color $Y$.
Furthermore for each player $l_i$ we have a hyperedge $(l_i)$ of color $L$ and for each player $r_i$ we have a hyperedge $(r_i)$ of color $R$.
The colored hypergraph is homomrophically equivalent to the fragment involving only the vertices $l_1, r_1, x_{11},y_{11}$ and their corresponding hyperedges.
Therefore these colored hypergraphical games have modulo-treewidth 3, while the treewidth of each hypergraph is at least $m$.
\end{example}

\subsection{Parameterized Complexity Theory}
Our results make use of certain concepts from he theory of parameterized complexity developed by Downey and Fellows~\cite{downey99parameterized}. They are not essential for understanding our reductions. We briefly mention the relevant concepts here and refer the reader to \cite{downey99parameterized,Grohebook} for more details.

%There are many algorithms whose running time is exponential, but is exponential in terms of some parameters that are expected to be small in typical input instances. For handling  such cases, the theory of parameterized complexity was developed by Downey and Fellows~\cite{downey99parameterized} to study such problems.

Given a decision problem $P\subseteq \Sigma^*$,  a parameterization of $P$ is a mapping $k:\Sigma^*\rightarrow N$ that defines the  paramterized problem $(P, k)$.
A paramterized problem $(P, k)$ is {\em fixed parameter tractable} if there is a computable function $f:N\rightarrow N$ and an algorithm that decides if a given instance $x\in \Sigma^*$ belongs to $P$ in time $f(k(x))|x|^{O(1)}$ for some function $f$ depending only on  $k$.
\fullver{For example, there is an algorithm~\cite{509224} which solves the vertex cover problem in $O(kn+1.271^k)$ time, where $n$ is the number of vertices and $k$ is the size of the vertex cover. Hence,  vertex cover is fixed-parameter tractable with respect to this parameter.}%end fullver
The class of all fixed parameter tractable problems is denoted by FPT.
\fullver{By defining fpt-reduction, as follows, between paramterized problems hardness and completeness of fixed parameter tractable problems are defined in the usual way.

An {\em fpt-reduction} from a  paramterized problem $(P, k)$ over $\Sigma$ to a paramterized problem $(P', k')$
over  $\Sigma_2$ is a mapping $R:\sigma^* \rightarrow \Sigma'^*$ such that:
\begin{itemize}
\item For all $x\in \Sigma^*$, $x\in P$ iff $R(x)\in P'$.
\item There is a computable function $f:N\rightarrow N$ such that $R(x)$ is computable in time $f(k(x)).|x|^c$ for some constant $c$.
\item There is a computable function $g:N\rightarrow N$ such that for all instances $x\in \Sigma^*$ we have $k'(R(x))\leq g(k(x))$.
\end{itemize}
The intuition behind the reduction is that if there is a fpt-reduction from a paramterized problem $(P, k)$ to a fixed parameter tractable problem $(P', k')$ then $(P, k)$ is fixed parameter tractable.
}%end fullver

Downey and Fellows~\cite{Downey1} defined the paramterized complexity class $W[1]$, which can be seen as an analogue of NP in parametrized complexity theory, and conjectured that FPT is a proper subset of W[1]. This conjecture is widely believed to be true.
%They also showed that the p-CLIQUE problem is $W[1]$-complete. In the p-CLIQUE problem the parameter is $k$ and we want to see if the input graph has a clique of size $k$.

%define parameterized version of the problems:
Let $\textit{p-HOM}(C, -)$ be the parametrized version
of $\textit{HOM}(C, -)$, with the parameter being the representation size of the left colored hypergraph.
Similarly we define  $\ppuregg(C,-)$ and  \\
$\ppurechg(C,-)$
to be the parametrized versions of \\
$\puregg(C,-)$ and $\purechg(C,-)$, with the parameters being the representation sizes of the directed graph and the colored hypergraph, respectively.
\subsection{Complexity of Homomorphism Problems}
Grohe~\cite{grohe}, in a breakthrough paper, characterizes the tractable instances of the homomorphism problem when we restrict the left input graphs.\footnote{Grohe stated his result on \emph{relational structures} instead of colored hypergraphs.
The two formulations are equivalent.}
\begin{theorem}[Grohe \cite{grohe}] \label{thm:grohe}
Assume $\textit{FPT}\neq \textit{W[1]}$. Then for every recursively enumerable class $C$ of colored hypergraphs with bounded arity the following  statements are equivalent.
\begin{enumerate}
\item $\textit{HOM}(C, -)$ is in polynomial time.
\item $\textit{p-HOM}(C, -)$ is fixed-parameter tractable.
\item $C$ has bounded modulo-treewidth.
\end{enumerate}

\end{theorem}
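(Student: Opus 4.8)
The plan is to prove the cycle of implications $(3)\Rightarrow(1)\Rightarrow(2)\Rightarrow(3)$; the middle implication is trivial, so essentially all the work is in $(3)\Rightarrow(1)$ (the easy, algorithmic direction) and $(2)\Rightarrow(3)$ (the hard, lower-bound direction).

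For $(3)\Rightarrow(1)$, suppose every $G\in C$ is homomorphically equivalent to a colored hypergraph of treewidth at most $k$. The underlying algorithmic fact is that testing for a homomorphism out of a bounded-treewidth left-hand structure is solvable in polynomial time by the standard join-tree/dynamic-programming procedure (Freuder; Dalmau--Kolaitis--Vardi): process a width-$k$ tree decomposition of the left structure bottom-up, maintaining at each bag the set of partial homomorphisms into $H$, in time $|V(H)|^{O(k)}$ (here we use that the arity is bounded, so each edge constraint is checkable efficiently). The one subtlety is that we are handed $G$, not its bounded-treewidth representative, and computing a core is not known to be polynomial. I would handle this exactly as in Dalmau--Kolaitis--Vardi: rather than constructing the core, run the local-consistency (strong $k$-consistency / existential $k$-pebble game) propagation directly on the pair $(G,H)$, which is polynomial, and invoke their characterization that establishing strong $k$-consistency correctly decides the existence of a homomorphism precisely when the core of $G$ has treewidth less than $k$ --- which holds for every $G\in C$. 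Hence consistency checking decides $\textit{HOM}(C,-)$ in polynomial time. The implication $(1)\Rightarrow(2)$ is immediate, since a polynomial-time algorithm is in particular fixed-parameter tractable with respect to any parameterization, in particular the representation size of the left structure.

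For $(2)\Rightarrow(3)$ I would argue the contrapositive: if $C$ does \emph{not} have bounded modulo-treewidth, then $\textit{p-HOM}(C,-)$ is $W[1]$-hard, and hence, assuming $\textit{FPT}\neq\textit{W[1]}$, not fixed-parameter tractable. Unbounded modulo-treewidth means the cores of members of $C$ have unbounded treewidth, so by the excluded-grid theorem of Robertson--Seymour, for every $\ell$ there is a $G\in C$ whose core contains the $\ell\times\ell$ grid as a minor; recursive enumerability of $C$ lets us actually \emph{find} such a $G$ of size bounded by a function of $\ell$ only. The reduction is from $p$-\textsc{Clique} (equivalently, the parameterized homomorphism problem over grids, known to be $W[1]$-hard): given a graph $F$ on $k$ vertices, pick $\ell$ large enough as a function of $k$, obtain $G\in C$ with an $\ell\times\ell$ grid minor in its core, and construct a right-hand colored hypergraph $H$, polynomial in $|F|$, so that homomorphisms $G\to H$ correspond exactly to $k$-cliques in $F$. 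This is an fpt-reduction: the left structure is a fixed member of $C$ whose size depends only on $k$, and $|H|$ is polynomial in the input.

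The genuine obstacle --- and the technical core of Grohe's paper --- is the last step: we have essentially no control over \emph{which} $G\in C$ we receive (only over its treewidth lower bound, via enumeration) and no control over its internal structure beyond the existence of a large grid minor, yet we must ``program'' an arbitrary $k$-clique instance into it using only the freedom to choose $H$. This requires turning a grid minor into a usable \emph{grid-like} gadget that is robust under the minor models and the colouring, arguing that homomorphisms $G\to H$ are forced to route through the grid-like pattern in a structured way, and reconciling all of this with the passage to the core (one works with a homomorphic image of $G$ of bounded treewidth but must still reason about $G$ itself). The combinatorial bookkeeping here --- controlling branch decompositions, the ``well-linked'' sets realizing the grid, the interaction with edge colours, and the clean translation into clique detection --- is where the difficulty lies, and is precisely the content built on top of Robertson--Seymour and the Grohe--Marx machinery; I would treat that portion as the heart of the proof rather than something to reproduce in outline.
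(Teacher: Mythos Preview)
The paper does not prove this theorem at all: it is quoted verbatim as a result of Grohe~\cite{grohe} and used as a black box throughout. There is therefore no ``paper's own proof'' to compare your proposal against.

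That said, your outline is a faithful high-level sketch of how Grohe's original argument actually goes: the cycle $(3)\Rightarrow(1)\Rightarrow(2)\Rightarrow(3)$, the Dalmau--Kolaitis--Vardi consistency algorithm for the algorithmic direction (avoiding explicit core computation), and for the hard direction the contrapositive via the Robertson--Seymour excluded-grid theorem combined with an fpt-reduction from $p$-\textsc{Clique} that exploits a large grid minor in the core of some $G\in C$. You are also right to flag that the genuine technical content lies in encoding an arbitrary clique instance into the right-hand structure $H$ when one has no control over $G$ beyond the grid minor; that is indeed the heart of Grohe's paper and not something one reproduces in outline. For the purposes of the present paper, however, none of this is needed --- the theorem is simply invoked.
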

Under a slightly stronger assumption of $\textit{nonuniform-FPT} \neq$ \\
$\textit{nonuniform-W[1]}$, this result holds
for arbitrary (not necessarily recursively enumerable) class $C$.

%Notice that we are using colored hypergraphs instead of \emph{relational structures} used by Grohe. The two are equivalent and we find hypergraphs much simpler and readable to use.

\section{Main Result}\label{sec:main}
%\note{I've rearranged the subsections somewhat, moved some proofs to appendices. The subsection on the hardness for colored hypergraphical games is new.}

\subsection{Digraphs with sinks}
One's first intuition is to try and show a correspondence between
$\puregg(C, -)$ and $HOM(\hyper(C), - ) $ for  arbitrary classes of bounded-degree graphs.
In fact such correspondence does not exist for arbitrary graphs.
For example, a graphical game on a directed acyclic graph (DAG) always has a PSNE, which can be computed efficiently by a greedy algorithm that goes through vertices in the topological order, from sources to sinks.
Consider the class $D_k$ of $k$-bounded in-degree DAGs.
$\hyper(D_k)$ has unbounded modulo-treewidth, however $\puregg(D_k, -)$ is in polynomial time.
%Before going through the details of our result, let's make a further restriction on the  class of graphical games that we consider.
This is just an example of a more general phenomenon in graphical games. Let $GG = (G, \{U_i\}_{i\in N} )$ be a graphical game.
If $G$ has a sink $u$ (i.e. $u$ has out-degree zero) then the action of $u$ does not affect any other player. This means we can simply solve the game without player $u$ and the resulting game has a pure Nash equilibrium if and only if $GG$ has one.
\fullver{On the other hand, in homomorphism problems, a vertex with out-degree zero in the left hand graph has no special property.}

Intuitively, this is because
in a graphical game (and any game in general) each player has at least one action, and as a result, whatever actions
others chose, each player has at least one best response. \fullver{If we think of computing pure Nash equilibria in graphical games as a constraint satisfaction
problem,
A vertex with out-degree zero in a graphical game corresponds to a constraint that can always be satisfied.
Homomorphism problems (without restricting
the right hand side) do not have this property.}%end fullver

We formalize this intuition as the following  classification of digraphs into reducible and irreducible graphs:
\begin{definition}
A directed graph $G$ is \emph{irreducible} if it does not have a sink (a vertex with out-degree zero). Otherwise $G$ is \emph{reducible}.
%for every vertex $i\in V(G)$, there exists a strongly connected component (SCC) $\pi(i)$ with size at least 2 such that there is a directed path from $i$ to $\pi(i)$.
\end{definition}

It is often helpful to consider the strongly connected components (SCCs) of a directed graph. In particular, we can characterize irreducible graphs by their \emph{terminal SCCs}.
\begin{definition}
A strongly connected component(SCC) $\pi$ of $G$ is \emph{terminal} if there is no outgoing edges from $\pi$. A terminal SCC is by definition a maximal SCC.
\end{definition}

\begin{lemma}
If $G$ is irreducible then all its terminal SCCs have size at least 2.
\end{lemma}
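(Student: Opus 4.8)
The plan is to prove the contrapositive-flavored statement directly: suppose $G$ is irreducible and let $\pi$ be a terminal SCC; I will show $|\pi| \geq 2$ by ruling out $|\pi| = 1$. So assume for contradiction that $\pi = \{v\}$ is a singleton terminal SCC. By definition of a terminal SCC, there are no outgoing edges from $\pi$, i.e. no edge $(v, w)$ with $w \notin \pi$. Since $\pi = \{v\}$, there is also no edge $(v, w)$ with $w \in \pi$ (that would be a self-loop, and in a singleton SCC there is no such internal edge in the usual convention; I should note whether the graph model allows self-loops — in the graphical-game setting a self-loop would just mean a player's neighborhood is counted oddly, so I will assume simple digraphs without self-loops, consistent with the neighborhood definition $\nei(i) = \{i\} \cup \{j : (j,i) \in E\}$). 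Combining these two observations, $v$ has no outgoing edges at all, so $v$ has out-degree zero, i.e. $v$ is a sink. But this contradicts the assumption that $G$ is irreducible. Hence $|\pi| \geq 2$.

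The key steps in order: first, recall/invoke the definitions of irreducible (no sink) and terminal SCC (no outgoing edges from the component); second, assume a terminal SCC $\pi$ with $|\pi| = 1$, say $\pi = \{v\}$; third, argue that every outgoing edge of $v$ either leaves $\pi$ (forbidden since $\pi$ is terminal) or stays inside $\pi$ (impossible without a self-loop, since $\pi$ has only one vertex); fourth, conclude $v$ is a sink, contradicting irreducibility; fifth, therefore every terminal SCC has size at least $2$.

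The only genuine subtlety — and the "main obstacle," though it is minor — is the treatment of self-loops. If the digraph model permitted a self-loop $(v,v)$, then a singleton SCC $\{v\}$ with that self-loop would have an outgoing edge that does not leave the component, and $v$ would not be a sink, breaking the argument. I would handle this by pointing out that in the graphical games context self-loops carry no meaning (a player is always in its own neighborhood by definition, independently of edges), so we may assume without loss of generality that the underlying digraphs are loop-free; alternatively one can simply state the lemma under this standing convention. With that convention fixed, the proof is a two-line observation. I would also make explicit the trivial fact that every irreducible graph has at least one terminal SCC (the condensation is a DAG, hence has a sink node, which corresponds to a terminal SCC) only if it is needed elsewhere; for this lemma it is not strictly required, since the statement is vacuously true if there are no terminal SCCs, but noting it makes the subsequent use of the lemma cleaner.
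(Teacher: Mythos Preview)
Your proof is correct and matches the paper's own one-sentence justification: a singleton terminal SCC would have no outgoing edges and hence be a sink, contradicting irreducibility. Your added discussion of the self-loop convention is a reasonable clarification that the paper leaves implicit.
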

This is because otherwise, a vertex in a terminal SCC with only one vertex would have out-degree zero.

It turns out that for our purposes, given an arbitrary digraph we can focus on its subgraph resulting from iterative removal of sinks.
\begin{definition}
Given a directed graph $G$, its reduced graph \red(G) is the result of the following algorithm:
\begin{enumerate}
\item repeat until $G$ does not change:
\begin{enumerate}\item remove all vertices with out-degree zero as well as their incoming edges.
\end{enumerate}
\item return $G$
\end{enumerate}
\end{definition}

\fullver{We define the \emph{reducible vertices} of $G$ to be the vertices removed by the above algorithm, i.e. vertices in $V(G)\setminus V(\red(G))$.
It is straightforward to see that $\red(G)$ is an irreducible graph, and every reducible vertex is a SCC of size 1 in $G$.

Consider the DAG over SCCs of $G$. A reducible vertex either
\begin{itemize}
\item has out-degree zero (and is thus a terminal SCC), or
\item all of its outgoing edges point to reducible vertices that correspond to SCCs higher in the topological ordering of the DAG over SCCs.
\end{itemize}
Furthermore any vertex that satisfies one of the above condition would be removed by the algorithm and is thus a reducible vertex.
}%end fullver
%So the set of reducible vertices and thus \red(G)
%don't depend on the order of removal in the above algorithm.

%--------------------------
%\begin{definition}
%A vertex $v$ in a graph $G$ is \emph{reducible} if  either
%\item it has out-degree zero, or
%\item all of its outgoing edges point to reducible vertices.
%\end{definition}

\begin{definition}
Given a graphical game $GG=(G, \{U_i\}_{i\in N})$, its \emph{reduced game} $\red(GG)$ is  $(\red(G),  \{U_i\}_{i\in V(\red(G))})$,
 i.e.\ the game obtained by removing all agents corresponding to reducible vertices of $G$.
\end{definition}
$\red(GG)$ is well defined because for all $v \in \red(G)$, vertices that are neighbors of $i$ in $G$ are not reducible vertices, so they are still present in $\red(G)$.

\begin{lemma}
A graphical game $GG$ has a pure Nash equilibrium if and only if its reduced game $\red(GG)$ has a pure Nash equilibrium.
\end{lemma}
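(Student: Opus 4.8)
The plan is to reduce the statement to a single sink-removal step and then iterate. First observe that the algorithm defining $\red(G)$ terminates after finitely many iterations, and that deleting all current sinks simultaneously has the same effect as deleting them one at a time; so there is a sequence of digraphs $G = G_0, G_1, \ldots, G_t = \red(G)$ in which each $G_{i+1}$ arises from $G_i$ by deleting a single vertex $u_i$ that is a sink of $G_i$. Since $u_i$ has out-degree zero in $G_i$, it lies in no neighborhood $\nei(v)$ with $v\neq u_i$; hence each game $GG_i = (G_i, \{U_v\}_{v\in V(G_i)})$ is well-formed (every retained $U_v$ still takes exactly $|\nei(v)|$ arguments, computed in the same $G_i$-neighborhood), and $GG_t = \red(GG)$. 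It therefore suffices to prove the one-step claim: if $u$ is a sink of $G$ and $GG' = (G-u,\{U_v\}_{v\neq u})$, then $GG$ has a pure Nash equilibrium if and only if $GG'$ does.

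For the one-step claim I would use locality of best responses together with the fact that a sink influences no one. Because $u\notin\nei(v)$ for every $v\neq u$, any action profile $\mathbf{s}$ of $GG$ with restriction $\mathbf{s}'$ to $N\setminus\{u\}$ satisfies $u_v(\mathbf{s}) = U_v(s_{\nei(v)}) = U_v(s'_{\nei(v)}) = u_v(\mathbf{s}')$ for all $v\neq u$, and $BR_v$ depends only on the actions of $v$'s neighbors, none of which is $u$; so $s_v\in BR_v(s_{-v})$ in $GG$ iff $s'_v\in BR_v(s'_{-v})$ in $GG'$. For the forward direction, restricting a pure Nash equilibrium $\mathbf{s}$ of $GG$ to $N\setminus\{u\}$ yields a pure Nash equilibrium of $GG'$ by this equivalence. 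For the reverse direction, given a pure Nash equilibrium $\mathbf{s}'$ of $GG'$, extend it by keeping $s_v = s'_v$ for $v\neq u$ and setting $s_u$ to be any element of $BR_u(s'_{-u})$, which is nonempty since $S_u$ is nonempty; then $u$ plays a best response by construction and every $v\neq u$ does by the equivalence, so $\mathbf{s}$ is a pure Nash equilibrium of $GG$. Iterating along $G_0,\ldots,G_t$ gives the theorem.

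I expect the argument to be essentially routine rather than to contain a serious obstacle; the only points needing care are the bookkeeping that each intermediate game $GG_i$ is well-formed (handled above through the out-degree-zero condition) and, more conceptually, the load-bearing hypothesis that $S_u\neq\emptyset$ so that a removed sink can always be reassigned a best response. This last point is precisely the asymmetry between $\puregg(C,-)$ and an unrestricted homomorphism problem, where a source of the left-hand structure need not be mappable, and it is worth stating explicitly in the proof.
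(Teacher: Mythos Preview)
Your proof is correct and follows essentially the same approach as the paper: restrict a PSNE of $GG$ to the surviving players for one direction, and greedily extend a PSNE of $\red(GG)$ by assigning best responses to the removed sinks for the other. The only cosmetic difference is that the paper does the extension in one pass by noting the reducible vertices induce a DAG and processing them in topological order, whereas you organize the same greedy extension as an induction over single sink removals; the content is the same.
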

\fullver{\begin{proof}
It is straightforward to see that if $GG$ has a pure Nash equilibrium $\mathbf{s}$, then $s_{\red(G)}$, the tuple of actions from agents in $\red(G)$,  is a pure Nash equilibrium of $\red(GG)$.

Now suppose $\red(GG)$ has a pure Nash equilibrium $s_{\red(G)}$. We now extend it to a pure Nash equilibrium of $GG$ by assigning actions to the reducible vertices of $G$. Consider the graph $H$ of the reducible vertices of $G$ and edges between them. $H$ is a DAG since every reducible vertex is a SCC.
Consider the vertices in $H$ in topological order and for each $i\in V(H)$, assign an  action that is a best response  to her neighbors' actions. This is possible because the neighbors of $i$ is either in $\red(G)$ or earlier in topological order in $H$.
\end{proof}
Note that the above can be carried out in polynomial time. This implies the following result.
}%end fullver

%This means that given a class $\mathcal{C}$ of graphs with bounded in-degree, if all induced hypergraphs of reduced graphs of  $\mathcal{C}$
%have bounded modulo-treewidth, then $\puregg(\mathcal{C},-)$ is in P.

%Are these all the tractable cases? In fact the answer is no, due to graphs that are exponentially larger than their reduced graphs.
%Formally, let $\mathcal{C}$ be a class of graphs whose reduced graphs have $k\log_m(n)$ vertices where $m$ is the maximum number of actions. Then a polynomial time algorithm for $\puregg(\mathcal{C}, -)$ is the following: given a graphical game with $n$ agents, compute its reduced graph, then solve for pure NE on the reduced game using brute force (i.e. treating it as normal form). The running time is $O(m^{(k\log_m(n))} ) = O(n^k)$.

%tractability
\begin{lemma}\label{lem:redC2C}
Suppose $\mathcal{C}$ is a recursively enumerable class of graphs with bounded in-degree,  such that $\puregg(\red(\mathcal{C}) , -)$ is in P.
Then $\puregg(\mathcal{C},-)$ is in P.
\end{lemma}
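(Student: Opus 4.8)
The plan is to reduce $\puregg(\mathcal{C},-)$ to $\puregg(\red(\mathcal{C}),-)$ by the obvious map: given an input graphical game $GG = (G, \{U_i\}_{i\in N})$ with $G \in \mathcal{C}$, compute $\red(GG) = (\red(G), \{U_i\}_{i\in V(\red(G))})$ and feed it to the assumed polynomial-time procedure for $\puregg(\red(\mathcal{C}),-)$. By the previous lemma, $GG$ has a pure Nash equilibrium iff $\red(GG)$ does, so this is a correct decision procedure provided (a) the transformation runs in polynomial time and (b) $\red(GG)$ is a legitimate instance of $\puregg(\red(\mathcal{C}),-)$.

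For step (a): the reduction algorithm defining $\red(G)$ repeatedly deletes out-degree-zero vertices; each pass can be done in time polynomial in the size of $G$, and there are at most $|V(G)|$ passes, so $\red(G)$ — and hence $\red(GG)$, since we simply restrict the already-given utility tables $U_i$ to the surviving vertices — is computable in time polynomial in the representation size of $GG$. (Here it matters that $\mathcal{C}$ has bounded in-degree, so the representation size is polynomial in $n$ and $m$ and the utility tables we copy over are not blown up.) Since the assumed algorithm for $\puregg(\red(\mathcal{C}),-)$ runs in time polynomial in its input size, and $|\red(GG)| \le |GG|$, the composition is polynomial time overall.

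For step (b): I need $\red(G) \in \red(\mathcal{C})$, which is immediate from the definition of the class $\red(\mathcal{C})$ as $\{\red(G) : G \in \mathcal{C}\}$; and I need $\red(GG)$ to be a well-formed graphical game on $\red(G)$, i.e. each retained $U_i$ must take exactly $|\nei_{\red(G)}(i)|$ arguments. This is exactly the observation recorded right after the definition of $\red(GG)$: for every $v \in V(\red(G))$, all neighbors of $v$ in $G$ are themselves non-reducible, hence still present in $\red(G)$, so the neighborhood of $v$ is unchanged by the reduction and $U_v$ still has the right arity. One should also note the recursive-enumerability hypothesis on $\mathcal{C}$ is inherited where needed, though for membership in $\mathrm{P}$ it plays no role here — it is only relevant for the harder direction elsewhere in the paper.

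The main obstacle, such as it is, is bookkeeping rather than mathematical depth: one must be careful that ``polynomial time'' is measured against the correct input size (the compact graphical-game representation, which is polynomial in $n$ and $m$ precisely because of the bounded-in-degree assumption), and that restricting the utility tables does not require recomputing them — we literally reuse the matrices already present in the input. Given the preceding lemma does all the work of establishing the equivalence of pure-equilibrium existence, the proof is essentially a one-paragraph verification that the reduction is polynomial and produces valid instances.
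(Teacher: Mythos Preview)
Your proposal is correct and matches the paper's approach: the paper treats this lemma as an immediate consequence of the preceding equivalence lemma, noting only that ``the above can be carried out in polynomial time,'' and you have simply spelled out the details of that polynomial-time reduction (compute $\red(GG)$, invoke the assumed algorithm, and appeal to the equivalence). Your observation that the recursive-enumerability hypothesis is irrelevant for this direction is also accurate.
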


For the other direction, we would like to prove that
if graphical games on a class of graphs $\mathcal{C}$ is tractable, then graphical games on $\red(\mathcal{C})$ is also tractable.
This is not trivial, because although the reducible vertices of graphs in $\mathcal{C}$ do not affect the existence of pure equilibria, the subgraphs on these vertices could potentially carry information (similar to advice strings in complexity theory) such that $\puregg(\mathcal{C},-)$ is easier than $\puregg(\red(\mathcal{C}) , -).$
It turns out that if we consider the parameterized version of the problem, then if $\ppuregg(\mathcal{C},-)$ is in FPT then $\ppuregg(\red(\mathcal{C}),-)$ is in FPT. This will be sufficient for our purposes. The proof of the following lemma is given in Appendix \ref{sec:fptproof}.
\begin{lemma}\label{lem:fpt}
Suppose $\mathcal{C}$ is a recursively enumerable class of graphs with bounded in-degree,  such that $\ppuregg(\mathcal{C},-)$ is in FPT. Then $\ppuregg(\red(\mathcal{C}),-)$ is in FPT.
\end{lemma}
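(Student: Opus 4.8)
The plan is to exhibit an fpt-reduction from $\ppuregg(\red(\mathcal{C}),-)$ to $\ppuregg(\mathcal{C},-)$ that ``undoes'' sink removal: given an input graphical game on a graph $H\in\red(\mathcal{C})$, I would reconstruct a graphical game on a graph $G\in\mathcal{C}$ with $\red(G)=H$, run the assumed fpt-algorithm on it, and transfer the answer back using the earlier lemma that a graphical game has a pure Nash equilibrium if and only if its reduced game does.

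First I would recover a suitable $G$. Since $\mathcal{C}$ is recursively enumerable, enumerate $G_1,G_2,\dots$ and for each $G_i$ compute $\red(G_i)$, stopping at the first one isomorphic to $H$; such a $G_i$ exists by the definition of $\red(\mathcal{C})$, so this halts. Because $\mathcal{C}$ has bounded in-degree, there are only finitely many graphs of any given representation size, so the number of enumeration steps performed --- and hence $\|G\|$ for the recovered $G:=G_i$ --- is bounded by a function of the parameter $\|H\|$ alone. After relabelling we may assume $V(H)\subseteq V(G)$ with $H=G[V(H)]$; one checks directly from the definition of $\red$ that every in-neighbour in $G$ of a vertex surviving in $\red(G)$ also survives, so $\nei(v)$ computed in $G$ equals $\nei(v)$ computed in $H$ for every $v\in V(H)$, and no reducible vertex of $G$ is an in-neighbour of a vertex of $H$.

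I would then build $GG=(G,\{U'_v\}_{v\in V(G)})$ as follows: give each reducible vertex a single dummy action; set $U'_v:=U_v$ for each $v\in V(H)$ (the arities agree by the neighbourhood equality above); and set $U'_v\equiv 0$ for each reducible $v$. The latter matrices are well-formed and have size at most $m^{\indeg}$, where $m$ is the maximum number of actions of a player in the input game, hence polynomial in $\|HH\|$ since $\indeg$ is a constant. By construction $\red(GG)=HH$, so by the reduced-game lemma $GG$ has a pure Nash equilibrium if and only if $HH$ does. Running the assumed fpt-algorithm for $\ppuregg(\mathcal{C},-)$ on $GG$ therefore decides the question for $HH$; since its running time is $f(\|G\|)\cdot\|GG\|^{O(1)}$ with $f$ computable, $\|G\|$ is bounded by a function of $\|H\|$, and $\|GG\|$ is polynomial in $\|HH\|$ times a function of $\|H\|$, and since the search and padding steps likewise cost at most a function of $\|H\|$ times a polynomial in $\|HH\|$, the whole procedure runs in time $f'(\|H\|)\cdot\|HH\|^{O(1)}$, i.e.\ it is an fpt-algorithm.

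The main obstacle is making the graph-recovery step legitimate as part of an fpt-algorithm: one must charge the cost of ``enumerate $\mathcal{C}$ until a $\red$-preimage of $H$ appears'', together with the size of the recovered $G$, to the parameter $\|H\|$ rather than to the (exponentially large) utility tables of the input. This is precisely where recursive enumerability together with bounded in-degree is essential --- each representation size admits only finitely many candidate graphs --- and it is the bookkeeping deferred to the appendix; by contrast, the padding construction and the transfer of correctness are routine given the reduced-game lemma.
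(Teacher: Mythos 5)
Your proposal is correct and follows essentially the same route as the paper's own proof: enumerate the recursively enumerable class $\mathcal{C}$ until a graph $G$ with $\red(G)=H$ is found (charging that cost to the parameter $\|H\|$), then invoke the assumed fpt-algorithm on a game over $G$ and transfer the answer via the reduced-game lemma. You are in fact somewhat more explicit than the paper, which leaves implicit the step of padding the input game with dummy single-action, zero-utility players on the reducible vertices; your version spells this out correctly.
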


We can define analogous concepts for colored hypergraphs and colored hypergraphical games, by looking at their induced digraphs. A colored hypergraph $G\in \Sigma$ is irreducible if and only if its induced digraph is irreducible. Given $G\in \Sigma$, its reduced colored hypergraph $\red(G)$ is obtained by removing all reducible vertices of the induced digraph of $G$ and all hyperedges that include these reducible vertices. Reduced colored hypergraphical games can be defined similarly. Lemmas \ref{lem:redC2C} and \ref{lem:fpt} can be straightforwardly extended to colored hypergraphical games.

\subsection{Main Theorems}
The above implies that for the complexity of $\puregg(C,-)$ and
$\purechg(C,-)$, it is sufficient to consider irreducible graphs.
The complexity for a general class $C$ then correspond to the complexity for $\red(C)$.
It turns out that if we restrict to irreducible graphs, there exists a correspondence between $\purechg(C, -)$ and $HOM(\hyper(C), - ) $.
We are now ready to state our main result, which will be proved in the rest of Section \ref{sec:main}:

\begin{theorem}\label{thm:complexity}
Assume $\textit{FPT}\neq \textit{W[1]}$. Then for every recursively enumerable class of bounded arity colored hypergraphs $\mathcal{C}\subseteq \Sigma$,
the following statements are equivalent.
\begin{enumerate}
\item $\purechg(C,-)$ is in polynomial time.
\item $\ppurechg(C,-)$ is fixed-parameter tractable.
\item for every $G\in \mathcal{C}$, $\red(G)$ has bounded modulo-treewidth.
\end{enumerate}
\end{theorem}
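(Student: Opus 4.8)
We prove the cycle $(1)\Rightarrow(2)\Rightarrow(3)\Rightarrow(1)$. The implication $(1)\Rightarrow(2)$ is trivial, since a polynomial-time algorithm is fixed-parameter tractable. For the other two we first pass to the \emph{irreducible} case. Write $\red(\mathcal{C})=\{\red(G):G\in\mathcal{C}\}$; since $\red$ is computable this class is recursively enumerable, it has bounded arity, it consists only of irreducible colored hypergraphs, and it lies in $\Sigma$ (if a vertex $v$ survives the iterated sink-removal then so do all of $v$'s neighbours, because an edge $u\to v$ of $\dig(G)$ persists until $v$ is removed, so the unique edge $e_v$ of $G$ survives intact). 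Condition~(3) is exactly the statement that $\red(\mathcal{C})$ has bounded modulo-treewidth. By the colored-hypergraph analogues of Lemmas~\ref{lem:redC2C} and~\ref{lem:fpt}, $\purechg(\red(\mathcal{C}),-)\in\mathrm{P}$ implies $\purechg(\mathcal{C},-)\in\mathrm{P}$, and $\ppurechg(\mathcal{C},-)\in\mathrm{FPT}$ implies $\ppurechg(\red(\mathcal{C}),-)\in\mathrm{FPT}$. So it suffices to tie $\purechg$ on irreducible classes to the homomorphism problem and invoke Grohe's theorem (Theorem~\ref{thm:grohe}) applied to $\red(\mathcal{C})$.

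\medskip\noindent\textbf{The easy direction.} For any $\mathcal{C}'\subseteq\Sigma$ we reduce $\purechg(\mathcal{C}',-)$ to $\textit{HOM}(\mathcal{C}',-)$ in polynomial time: given a colored hypergraphical game $\Gamma$ on $G\in\mathcal{C}'$ with $m$ actions, output the instance $(G,H_\Gamma)$, where $V(H_\Gamma)=[m]$ and, for each color $c$, the tuple $(a_1,\dots,a_{\mathcal{I}_c})$ is a $c$-edge of $H_\Gamma$ precisely when $a_1$ is a best response to $(a_2,\dots,a_{\mathcal{I}_c})$ under $U_c$. A homomorphism $G\to H_\Gamma$ is exactly an action profile in which every player best-responds to its neighbours, i.e.\ a pure Nash equilibrium of $\Gamma$ (this mirrors \cite{Gottlob03}'s formulation of graphical games as CSPs). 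Hence $(3)\Rightarrow(1)$: if (3) holds then $\red(\mathcal{C})$ has bounded modulo-treewidth, so $\textit{HOM}(\red(\mathcal{C}),-)\in\mathrm{P}$ by Theorem~\ref{thm:grohe}, so $\purechg(\red(\mathcal{C}),-)\in\mathrm{P}$, and then $\purechg(\mathcal{C},-)\in\mathrm{P}$.

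\medskip\noindent\textbf{The hard direction.} For an \emph{irreducible} class $\mathcal{C}'\subseteq\Sigma$ we reduce $\textit{HOM}(\mathcal{C}',-)$ to $\purechg(\mathcal{C}',-)$. Given $G\in\mathcal{C}'$ with $n=|V(G)|$ and an arbitrary colored hypergraph $H$, we build a game \emph{on $G$} (so we may only choose the utilities, and players of the same color must share one utility function) whose equilibria are the homomorphisms $G\to H$. Let the action set be $V(H)\uplus\{\bot_1,\dots,\bot_{n+1}\}$, and for a color $c$ and a tuple $t$ of neighbours' actions let the best-response set be: the (nonempty) set of $c$-extensions of $t$ in $H$, if $t$ is a tuple of $V(H)$-vertices admitting at least one; $\{\bot_{n+1}\}$, if $t$ is a tuple of $V(H)$-vertices admitting none; $\{\bot_{k-1}\}$, if $t$ contains some $\bot_j$ and the smallest such index $k$ is $\ge 2$; and the whole action set (an arbitrary choice) if that smallest index is $1$. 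Each case is realized by an easily-defined $U_c$ depending only on $c$ and $H$, and every homomorphism $h\colon G\to H$ is a pure equilibrium. Conversely, let $s$ be any pure equilibrium and $B$ the set of players playing some $\bot_j$. If some player plays $\bot_1$ then it has a neighbour playing $\bot_2$, which has a neighbour playing $\bot_3$, $\dots$, which has a neighbour playing $\bot_{n+1}$; these are $n+1$ players at distinct levels, contradicting $|V(G)|=n$. So no player plays $\bot_1$; then any out-neighbour in $\dig(G)$ of a $\bot_j$-player ($j\ge2$) is forced to play some $\bot_{k-1}$ with $k\ge2$, so $B$ is closed under out-edges of $\dig(G)$. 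If $B\ne\emptyset$, irreducibility ($\dig(G)$ has no sink) forces a directed cycle inside $B$, along which the levels strictly decrease --- impossible. Hence $B=\emptyset$, and reading off $s$ shows every edge of $G$ maps to an $H$-edge of the same color, so $s$ is a homomorphism. The game has size polynomial in $|G|,|H|$ (arities are bounded), its graph is $G$, and the parameter $|G|$ is preserved, so this is also an fpt-reduction. Hence $(2)\Rightarrow(3)$: if (2) holds then $\ppurechg(\red(\mathcal{C}),-)\in\mathrm{FPT}$, so $\textit{p-HOM}(\red(\mathcal{C}),-)\in\mathrm{FPT}$, so by Theorem~\ref{thm:grohe} $\red(\mathcal{C})$ has bounded modulo-treewidth, which is (3).

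\medskip\noindent\textbf{Main obstacle.} Everything outside the hard direction is routine; the real work is ruling out ``spurious'' equilibria that use the auxiliary actions without encoding a homomorphism. A single failure symbol does not suffice, because failure can sustain itself around a directed cycle. Irreducibility is exactly the leverage: the set of failed players is closed under out-edges of a \emph{sink-free} digraph, hence contains a cycle, along which a monotone counter would have to strictly decrease --- a contradiction --- provided we allot enough levels ($n+1$) that the counter cannot run out before a cycle is reached. This also explains why the correspondence genuinely breaks for DAGs: $\puregg(D_k,-)\in\mathrm{P}$ even though $\hyper(D_k)$ has unbounded modulo-treewidth.
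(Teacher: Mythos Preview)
Your overall architecture matches the paper: pass to the irreducible class $\red(\mathcal{C})$ via Lemmas~\ref{lem:redC2C} and~\ref{lem:fpt}, reduce $\purechg$ to $\textit{HOM}$ for the easy direction (this is exactly Lemma~\ref{lem:tractable}), and for the hard direction build, given an irreducible $G\in\Sigma$ and arbitrary $H$, a colored hypergraphical game on $G$ whose equilibria are the homomorphisms $G\to H$. The problem is that your specific gadget for the hard direction is broken.

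Take any irreducible $G$ in which every vertex has at least one in-neighbour in $\dig(G)$ (e.g.\ a directed $3$-cycle, or any strongly connected $G$). In your game, consider the profile where \emph{every} player plays $\bot_1$. Then for each player the neighbour tuple contains some $\bot_j$, the smallest such index is $1$, and by your own rule the best-response set is the whole action set --- so $\bot_1$ is a best response for everyone. This is a pure Nash equilibrium regardless of $H$, even when there is no homomorphism $G\to H$. Consequently your claim ``if some player plays $\bot_1$ then it has a neighbour playing $\bot_2$'' is false: the only cases in which $\bot_1$ is a best response are (i) minimum neighbour index equals $2$, or (ii) minimum neighbour index equals $1$; in case~(ii) there need not be any $\bot_2$-neighbour at all, and the ascending chain never gets off the ground. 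The subsequent cycle argument (levels strictly decrease along out-edges once $\bot_1$ is excluded) is fine, but it rests on a premise you have not established.

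This is precisely the ``main obstacle'' you identify, and it is not dispatched by a single monotone counter with an ``anything goes'' floor. The paper's construction is genuinely different: it uses $2n+1$ failure levels, takes the \emph{maximum} neighbour level and \emph{increments} (so propagation runs forward along out-edges, which is where irreducibility bites), and wraps $2n{+}1$ back to $n{+}1$ to form a $\rho$-shape rather than a cycle on the failure actions. The no-spurious-equilibrium argument then goes through a more delicate SCC analysis (find the earliest-reached non-singleton SCC among the failing players, argue its members must sit in the ``cycle'' part of the $\rho$, and extract an active cycle inside that SCC). Your decrement-on-min idea could perhaps be repaired with a comparable wrap-around and SCC analysis, but as written the reduction does not prove $(2)\Rightarrow(3)$.
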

The direction $1\rightarrow 2$ is trivial; the ``tractability'' direction $3\rightarrow 1$ is proved in Section \ref{sec:tractable}; the ``hardness" direction $2\rightarrow 3$ is proved in Sections
\ref{sec:hardness} and \ref{sec:chghardness}.

We then obtain as a corollary the characterization for the
complexity of $\puregg(C, -)$.
We make use of the following lemma on the modulo-treewidth of $\hyper(G)$.
\begin{lemma}\label{lem:modtw}
Given a digraph $G$, the modulo-treewidth of $\hyper(G)$ equals the treewidth of $\hyper(G)$.
\end{lemma}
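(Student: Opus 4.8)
The plan is to prove the inequality in both directions. One direction is immediate: $\hyper(G)$ is homomorphically equivalent to itself, so $\textit{modulo-treewidth}(\hyper(G))$ is at most the treewidth of $\hyper(G)$. All the content lies in the reverse inequality, for which it suffices to show that every colored hypergraph $H$ homomorphically equivalent to $\hyper(G)$ has treewidth at least that of $\hyper(G)$; then $\hyper(G)$ itself realizes the minimum defining modulo-treewidth.

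The key step will be a rigidity property of $\hyper(G)$: its only endomorphism (homomorphism to itself) is the identity. This is where the special structure of induced hypergraphs enters. In $\hyper(G)$ the set of colors is $V(G)$, and for each vertex $v$ there is exactly one edge of color $v$ --- the edge $e_v$, whose first coordinate is $v$. Hence if $h$ is an endomorphism of $\hyper(G)$, then $h(e_v)$ is again an edge of color $v$, so $h(e_v)=e_v$ as a tuple, and comparing first coordinates gives $h(v)=v$. Since $v$ was arbitrary, $h$ is the identity on vertices. I expect this rigidity to be the heart of the proof; everything after it is routine.

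Granting rigidity, suppose $H$ is homomorphically equivalent to $\hyper(G)$, witnessed by homomorphisms $f\colon \hyper(G)\to H$ and $g\colon H\to \hyper(G)$. Then $g\circ f$ is an endomorphism of $\hyper(G)$, hence the identity, so $f$ is injective on vertices. Consequently $f$ sends the primal graph $\pri(\hyper(G))$ isomorphically onto a subgraph of $\pri(H)$: injectivity handles the vertices, and any two vertices occurring together in an edge $e$ of $\hyper(G)$ are sent to vertices occurring together in the edge $f(e)$ of $H$. Since treewidth is monotone under taking subgraphs, and the treewidth of a colored hypergraph is by definition the treewidth of its primal graph, the treewidth of $\hyper(G)$ is at most the treewidth of $H$. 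Minimizing over all such $H$ gives the reverse inequality, completing the proof. It is worth noting that the argument genuinely uses the structure of $\hyper(G)$ (one distinctly coloured edge per vertex, with that vertex in first position): without this rigidity, treewidth and modulo-treewidth can differ, which is precisely why the analogous statement fails for arbitrary colored hypergraphs and is what makes the colored hypergraphical game characterization in Theorem~\ref{thm:complexity} strictly richer than the graphical game one.
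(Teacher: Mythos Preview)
Your proof is correct. Both your argument and the paper's hinge on the same observation: in $\hyper(G)$ each hyperedge carries a distinct color, so any homomorphism from $\hyper(G)$ into itself must fix every edge and hence every vertex. The paper packages this via the notion of a \emph{core}: it cites Grohe's Lemma~2.5 (modulo-treewidth equals the treewidth of the core) and then argues that the core of $\hyper(G)$ cannot be a proper substructure, since a smaller substructure would lack some color and hence admit no homomorphism from $\hyper(G)$. You instead prove rigidity directly (the only endomorphism is the identity), then use $g\circ f=\mathrm{id}$ to embed $\pri(\hyper(G))$ into $\pri(H)$ for any homomorphically equivalent $H$, and appeal to monotonicity of treewidth under subgraphs. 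Your route is more self-contained---it avoids the external citation and makes explicit the embedding step that is hidden inside the core lemma---while the paper's version is shorter because it outsources that step. Substantively the two arguments are the same.
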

\fullver{
\begin{proof}
Recall that a core of a relational structure/colored hypergraph $A$ is a minimal
substructure of $A$ that is homomorphically equivalent to $A$.
Lemma 2.5 of \cite{grohe} shows that the modulo-treewidth of $\hyper(G)$ equals the treewidth of the core of $\hyper(G)$.
We claim that the core of $\hyper(G)$
must equal $\hyper(G)$. Suppose otherwise, then
the core either has less vertices or less hyperedges or both.
If the core has less vertices, then the hyperedges corresponding to those vertices are also gone; since each vertex in $\hyper(G)$ has a corresponding hyperedge, so the core must have less hyperedges.
Since each hyperedge in $\hyper(G)$ has a unique color,
the core has less colors than $\hyper(G)$. Since in a homomorphism from $\hyper(G)$ to the core, a hyperedge of  color $c$ in $\hyper(G)$ must map to a hyperedge of the same color $c$ in the core, therefore there exists no homomorphism from $\hyper(G)$ to the core. This contradicts with the fact that the core is homomorphically equivalent to  $\hyper(G)$.
Therefore the modulo-treewidth of $\hyper(G)$ is equal to the
treewidth of the core, which is $\hyper(G)$.
\end{proof}
}%end fullver

Furthermore, we can relate  the treewidth of a digraph $G$ to the treewidth of $\hyper(G)$. Daskalakis and Papadimitriou \cite{Daskalakis06} showed that given an undirected graph $H$ with bounded degree, the treewidth of its induced hypergraph $\hyper(H)$ and the treewidth of $H$ are within a constant factor of each other.
This result cannot be directly applied to digraphs, because the
induced hypergraph of the undirected version of a digraph $G$ can be different from $\hyper(G)$. Nevertheless, their proof can be relatively straightforwardly adapted to digraphs, yielding the following lemma.
\begin{lemma}\label{lem:hypertw}
Given a digraph $G$ with bounded in-degree, the treewidth of $\hyper(G)$ and the treewidth of the undirected version of $G$ are within a constant factor of each other.
\end{lemma}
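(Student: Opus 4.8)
The plan is to establish the two directions of the ``within a constant factor'' statement separately, where only the upper bound on $\mathrm{tw}(\hyper(G))$ will use the bounded in-degree hypothesis. Throughout, write $G'$ for the undirected version of $G$ and recall that $\mathrm{tw}(\hyper(G))$ is, by definition, the treewidth of the primal graph $\pri(\hyper(G))$.

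First I would prove $\mathrm{tw}(G')\le\mathrm{tw}(\hyper(G))$ by showing that $G'$ is a subgraph of $\pri(\hyper(G))$. For every arc $(u,v)$ of $G$, the hyperedge $e_v$ of $\hyper(G)$ has underlying vertex set $\nei(v)=\{v\}\cup\{w:(w,v)\in E\}$, which contains both $u$ and $v$; hence $\{u,v\}$ is an edge of $\pri(\hyper(G))$. So every edge of $G'$ is present in $\pri(\hyper(G))$, and since treewidth is monotone under taking subgraphs the inequality follows, with no restriction on in-degree.

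For the reverse direction I would adapt the standard argument that bounds the treewidth of the square of a bounded-degree graph, but inflating bags only by \emph{in}-neighborhoods so that bag sizes stay controlled. Let $(T,W)$ be an optimal tree decomposition of $G'$ of width $k=\mathrm{tw}(G')$, and for each node $t$ of $T$ set $W'_t=\bigcup_{v\in W_t}\nei(v)$ (recall $\nei(v)=e_v$ as a vertex set, so $|\nei(v)|\le\indeg+1$), giving $|W'_t|\le(\indeg+1)(k+1)$. Then I would verify that $(T,W')$ is a tree decomposition of $\pri(\hyper(G))$: it covers every vertex since $W_t\subseteq W'_t$; it covers every edge because any edge of $\pri(\hyper(G))$ lies inside some $\nei(v)=e_v$ and, choosing a node $t$ with $v\in W_t$, we get $\nei(v)\subseteq W'_t$; and the subtree condition holds since for a fixed vertex $u$ we have $\{t:u\in W'_t\}=\bigcup_{v\in\{u\}\cup\nei^+(u)}\{t:v\in W_t\}$, where $\nei^+(u)$ is the (possibly large) out-neighborhood, and each $\{t:v\in W_t\}$ is a subtree that meets the subtree $\{t:u\in W_t\}$---for $v\in\nei^+(u)$ because the arc $(u,v)$ gives an edge of $G'$ and hence a bag of $(T,W)$ containing both $u$ and $v$---so the union of these subtrees is connected. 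This yields $\mathrm{tw}(\hyper(G))\le(\indeg+1)(k+1)-1$, and combining with the first inequality shows the two treewidths agree up to the factor $\indeg+1$ and additive constants.

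The step I expect to be the main obstacle is this subtree/connectedness verification. The tempting shortcut---noting $\pri(\hyper(G))$ is contained in the square $(G')^2$ and quoting $\mathrm{tw}((G')^2)\le(\Delta(G')+1)(\mathrm{tw}(G')+1)-1$---is useless here, because $\Delta(G')$ can be arbitrarily large when $G$ has high out-degree. The asymmetric construction above is what fixes this: enlarging each bag only along in-neighborhoods keeps $|W'_t|$ linear in $\indeg$, while every vertex $v$ newly added to $W'_t$ is joined to some vertex $u\in W_t$ by an arc of $G$ and hence by an edge of $G'$, which is exactly the property that keeps the union of subtrees connected despite the out-degree being unbounded.
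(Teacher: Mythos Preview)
Your argument is correct and is precisely the adaptation of Daskalakis and Papadimitriou's undirected argument that the paper alludes to: inflating each bag by the \emph{in}-neighborhoods only, so that bag sizes grow by a factor of $\indeg+1$, and then checking connectedness via the fact that every out-neighbor $v$ of $u$ shares a bag with $u$ in the original decomposition. The paper does not spell out the proof, but your asymmetric construction and the subtree-union argument are exactly the intended ``relatively straightforward'' adaptation.
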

This means for our purposes bounded treewidth of $\hyper(G)$
implies bounded treewidth of $G$ and vice versa.
We are now ready to state the characterization for $\puregg(C, -)$,
which is a direct consequence of
Theorem \ref{thm:complexity} and Lemmas \ref{lem:modtw} and \ref{lem:hypertw}.
\begin{corollary}\label{cor:complexityGG}
Assume $\textit{FPT}\neq \textit{W[1]}$. Then for every recursively enumerable class $C$ of  digraphs with bounded in-degree the following statements are equivalent.
\begin{enumerate}
\item $\puregg(C, -)$ is in polynomial time.
\item $\ppuregg(C, -)$ is fixed-parameter tractable.
\item %$C$ has bounded modulo-treewidth, i.e.\
for every $G\in C$, $\red(G)$ has bounded treewidth.
\end{enumerate}
\end{corollary}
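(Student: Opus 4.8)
The plan is to derive Corollary~\ref{cor:complexityGG} from Theorem~\ref{thm:complexity} by instantiating the latter at the class $\hyper(C)$ of induced colored hypergraphs and then translating the three conditions back to the digraph setting. First I would check that $\hyper(C)$ satisfies the hypotheses of Theorem~\ref{thm:complexity}: it lies in $\Sigma$ by construction (each vertex is the first element of exactly one hyperedge, and since each color is a distinct vertex carrying a single hyperedge, the same-arity condition is trivial; recall also $\dig(\hyper(G))=G$); it is recursively enumerable, being the image of the r.e.\ class $C$ under the computable map $\hyper$; and it has bounded arity, since an in-degree bound $\indeg$ on $C$ forces every hyperedge of $\hyper(G)$ to have arity at most $\indeg+1$. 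So Theorem~\ref{thm:complexity} applies verbatim to $\mathcal{C}:=\hyper(C)$.

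Next I would match conditions 1 and 2 of the corollary with those of the theorem. The map $\Gamma=(G,\{U_i\})\mapsto\hyper(\Gamma)=(\hyper(G),\{U_i\})$ is polynomial-time computable, its inverse (induced by $\dig$) is as well, and $\Gamma$ and $\hyper(\Gamma)$ represent the same game, hence one has a pure Nash equilibrium iff the other does. This gives $\puregg(C,-)\in\mathrm P \iff \purechg(\hyper(C),-)\in\mathrm P$. For the parameterized versions, I would observe that for bounded in-degree the representation size of $G$ and that of $\hyper(G)$ are within a constant factor, so the two parameterizations agree up to a computable (in fact linear) function, whence $\ppuregg(C,-)\in\mathrm{FPT}\iff\ppurechg(\hyper(C),-)\in\mathrm{FPT}$.

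The one genuinely load-bearing step is identifying condition 3 of the theorem with condition 3 of the corollary, and the key to this is the commutation $\red(\hyper(G))\cong\hyper(\red(G))$. I would prove this by noting that the induced digraph of $\hyper(G)$ is $G$, so the reducible vertices of $\hyper(G)$ are exactly those of $G$; moreover, as already observed when $\red(GG)$ was shown to be well defined, every neighbor of a non-reducible vertex is itself non-reducible, so deleting reducible vertices changes neither the surviving vertices' neighborhoods nor the ordering within hyperedges (up to the harmless choice of linear order). With this identity in hand I would chain the two metrics: Lemma~\ref{lem:modtw} gives $\textit{modulo-treewidth}(\hyper(\red(G)))=\textrm{treewidth}(\hyper(\red(G)))$, and Lemma~\ref{lem:hypertw}—applicable because $\red(G)$ inherits the in-degree bound of $G$—gives that $\textrm{treewidth}(\hyper(\red(G)))$ is within a constant factor (depending only on $\indeg$, hence uniform over $C$) of the treewidth of the undirected version of $\red(G)$, i.e.\ of $\textrm{treewidth}(\red(G))$. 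Consequently ``$\red(\hyper(G))$ has bounded modulo-treewidth for every $G\in C$'' is equivalent to ``$\red(G)$ has bounded treewidth for every $G\in C$,'' and the corollary follows.

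The main obstacle, such as it is, is purely bookkeeping: verifying $\red\circ\hyper\cong\hyper\circ\red$ carefully and confirming that $\hyper(C)$ meets every precondition of Theorem~\ref{thm:complexity}; there is no new combinatorial content beyond the three cited results.
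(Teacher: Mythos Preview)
Your proposal is correct and follows exactly the route the paper indicates: the corollary is stated as ``a direct consequence of Theorem~\ref{thm:complexity} and Lemmas~\ref{lem:modtw} and~\ref{lem:hypertw},'' and you have simply written out that derivation in full, including the verification that $\hyper(C)\subseteq\Sigma$ is recursively enumerable with bounded arity, the game-equivalence of $\Gamma$ and $\hyper(\Gamma)$, and the commutation $\red(\hyper(G))\cong\hyper(\red(G))$ needed to translate condition~3. There is no substantive difference from the paper's approach; you have just supplied the bookkeeping the paper leaves implicit.
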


Comparing Theorem \ref{thm:complexity} and Corollary \ref{cor:complexityGG}, CHGs gives a wider family of tractable games compared to graphical games. For example, the class of CHGs described in Example \ref{ex:modtw} has  bounded modulo-treewidth  but unbounded treewidth. Thus they would be intractable if represented as graphical games.

We also obtain as a corollary the characterization for the pure equilibrium problem
for graphical games defined on undirected graphs. Define
$\pureugg(C, -)$ to be the problem of deciding existence of pure equilibrium on such undirected graphical games, restricted to the class of graphs $C$. \shortver{Then under the same assumptions, $C$ is tractable if and only if its graphs have bounded treewidth.}

\fullver{\begin{corollary}
Assume $\textit{FPT}\neq \textit{W[1]}$. Then for every recursively enumerable class $C$ of  undirected graphs with bounded degree the following statements are equivalent.
\begin{enumerate}
\item $\pureugg(C, -)$ is in polynomial time.
\item $\ppureugg(C, -)$ is fixed-parameter tractable.
\item %$C$ has bounded modulo-treewidth, i.e.\
for every $G\in C$, $G$ has bounded treewidth.
\end{enumerate}
\end{corollary}
}%end fullver

\subsection{Proof of Tractability Result}\label{sec:tractable}

%\note{I've changed the statement and proof to colored version.
%The main difference in the proof is that $H'$ now only has $m$ vertices.}
\fullver{
In this section we prove direction $3\rightarrow 1$, restated as follows:
\begin{theorem}\label{thm:tractable}
Consider a recursively enumerable class $C \subset \Sigma$ of colored hypergraphs with bounded arity.
If for every $G\in C$, $\red(G)$ has bounded modulo-treewidth,
then $\textit{PURE-CHG}(C, -)$ is in polynomial time.
\end{theorem}
}%end fullver

We use the following lemma that reduces a colored hypergraphical game to a homomorphism problem instance.
\shortver{The tractability direction of Theorem \ref{thm:complexity} then follows straightforwardly.}
\begin{lemma}\label{lem:tractable}
Let $\Gamma = (G, \{U_i\}_{i\in N} )$ be a colored hypergraphical game.
It is possible to construct in polynomial time an instance $(G',H')$of homomorphism problem
such that $\Gamma$ has a pure equilibrium if and only if
there exists a homomorphism from $G'$ to $H'$.
Furthermore if $G$ has bounded arity and bounded modulo-treewidth
then so does $G'$.
\end{lemma}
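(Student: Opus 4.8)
The plan is to encode the pure‑equilibrium condition of $\Gamma$ as a conjunctive condition that a colored hypergraph $H'$ can "witness" via a homomorphism from a structure $G'$ built directly on top of $G$. Following the CSP‑style construction of Gottlob \emph{et al.}~\cite{Gottlob03}, the idea is: a pure equilibrium is an assignment of an action $s_v\in[m]$ to each player $v$ such that, for every player $v$, the action $s_v$ is a best response to the actions of its neighbors. Since in a colored hypergraphical game each $v$ has exactly one defining hyperedge $e_v=(v,v_1,\dots,v_r)$ of color $c=C(e_v)$, the best‑response condition at $v$ is a constraint whose scope is exactly the tuple $e_v$. I will set $G' = G$ (same vertices, same colored hyperedges), and build $H'$ on vertex set $[m]$ so that the hyperedges of color $c$ in $H'$ are precisely the local action profiles $(a,b_1,\dots,b_r)\in[m]^{\mathcal{I}_c}$ for which $a$ is a best response against $(b_1,\dots,b_r)$ under $U_c$. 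A homomorphism $h:G'\to H'$ is then exactly an action profile that is a mutual best response at every hyperedge, i.e. a pure equilibrium; conversely a pure equilibrium gives such an $h$. Here it is crucial that $H'$ "knows" which element of each hyperedge is the deviating player — this is exactly why we use \emph{ordered} hyperedges and colors, and why we require each vertex to be the first element of exactly one hyperedge.

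**First** I would verify the construction is polynomial time: $H'$ has $m$ vertices, and for each color $c$ the set of good tuples is computed by, for each $(b_1,\dots,b_r)$, evaluating $U_c$ on all $m$ candidate responses and keeping the maximizers; this is $O(m^{\mathcal{I}_c})$ per color, hence polynomial in the representation size of $\Gamma$ (which already contains the tables $U_c$ of size $m^{\mathcal{I}_c}$). **Next** I would prove the equivalence. If $\mathbf{s}$ is a pure equilibrium, define $h(v)=s_v$; for each hyperedge $e_v=(v,v_1,\dots,v_r)$ of color $c$, $s_v\in BR_v(s_{-v})$ and this best response depends only on $s_{v_1},\dots,s_{v_r}$, so $(s_v,s_{v_1},\dots,s_{v_r})$ is a good tuple of color $c$ in $H'$, hence an edge of $H'$, and the color is preserved by construction — so $h$ is a homomorphism. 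Conversely, given a homomorphism $h:G'\to H'$, set $s_v=h(v)$; the edge $e_v$ maps to an edge of $H'$ of the same color $c=C(e_v)$, which by construction is a good tuple, meaning $s_v$ is a best response to the other entries of $e_v$, i.e. to $s_{-v}$; since this holds for every $v$, $\mathbf{s}$ is a pure equilibrium.

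**Finally**, for the "furthermore" clause: since $G'=G$ as colored hypergraphs, it has exactly the same vertices, hyperedges, colors, and arities as $G$, so if $G$ has bounded arity and bounded modulo‑treewidth then $G'$ trivially does too (modulo‑treewidth is a property of the colored hypergraph itself). The only mild subtlety to note is that $G'$ need not lie in $\Sigma$ as a \emph{game} hypergraph — but that is irrelevant, since for the homomorphism problem $G'$ is just an arbitrary colored hypergraph on the left. The main obstacle I anticipate is not in any single step but in getting the bookkeeping of the ordered/colored hyperedge structure exactly right so that (a) the "deviator" is unambiguously identified as the first coordinate, and (b) colors in $H'$ are assigned so that color‑preservation of the homomorphism exactly matches "$h(v)$ is a best response under $U_{C(e_v)}$", with no spurious constraints introduced by hyperedges that are not of the form $e_v$ — which is handled by the Definition~\ref{def:chg} guarantee that \emph{every} hyperedge of $G$ is the defining edge $e_v$ of its first vertex $v$.
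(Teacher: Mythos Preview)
Your proposal is correct and essentially identical to the paper's own proof: you set $G'=G$, build $H'$ on vertex set $[m]$ with a hyperedge $(a,a_1,\ldots,a_r)$ of color $c$ for each best-response tuple under $U_c$, and argue the equivalence and the preservation of arity and modulo-treewidth exactly as the paper does. The additional bookkeeping remarks you make (polynomial-time analysis, role of the first coordinate and of Definition~\ref{def:chg}) are accurate and only make the argument more explicit.
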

\begin{proof}
Given  a colored hypergraphical game $\Gamma = (G, \{U_c\}_{c\in \tau} )$,
each player having $m$ actions, we construct the instance $(G', H')$ of the homomorphism problem as follows.
%\begin{description}
%\item[$G'$:]
Let $G' = G$.
%\item[$H'$:]
%Create a vertex $u^a$ of $H'$ for every action $a$ of every agent $u$ in $GG$.
$H'$ consists of  $m$ vertices, one for each action in $[m]$.
For each color $c$, for each action tuple $(a,a_1, a_2, \cdots, a_{r})$
such that
$$a\in \arg\max_{a'\in[m]} U_c(a', a_1, a_2, \cdots, a_{r})$$
(i.e. $a$ is a best response for a player with utility function $U_c$
given neighbor actions $(a_1, a_2, \cdots, a_{r})$),
create an hyperedge\\
 $(a,a_1, a_2, \cdots, a_{r})$ of $H'$ with color $c$.

%Corresponding to every edge $e^u = (u, u_1,u_2, \cdots, u_{r})$ in $G'$ and every acceptable action tuple $(a, a_1, a_2, \cdots, a_{r})$ for players $u, u_1, \cdots, u_{r}$ create an edge $$(u^a, u^{a_1}_1, u^{a_2}_2, \cdots, u^{a_{r}}_{r})
%$$ with the same color as $e^u$. $(a, a_1, a_2, \cdots, a_{r})$ is an  acceptable tuple if $a$ is a best response action for player $u$
%when  her neighbors are playing actions $(a_1, a_2, \cdots, a_{r})$.

%\end{description}

If $\Gamma$ has a pure Nash equilibrium then the mapping that maps each vertex $u$ to the vertex $a$, where $a$ is the action chosen by $u$ in the pure Nash equilibrium, is a homomorphism. For the other direction, if $H'$ is a homomorphism of $G'$ and the corresponding mapping function is $\ell$ then
%\begin{itemize}
 $\ell(u)$ corresponds to an action of $u$,
and for every edge $e^u = (u, u_1, u_2, \cdots, u_{r})$ of color $c$ in $G'$,  $\ell(e^u) =  (\ell(u), \ell(u_1), \ell(u_2), \cdots, \ell(u_{r}))$  must be an edge of color $c$ in $H'$.
This implies that  $\ell(u)$ is a best response action of player $u$ against his neighbors' actions.
%\end{itemize}
Therefore, the mapping $\ell$ corresponds to a pure Nash equilibrium.

Since $G'$  is identical to $G$,  both maximum arity and modulo-treewidth remain unchanged.
\end{proof}

\fullver{
\begin{proof}[Proof of Theorem \ref{thm:tractable}]
%$2\rightarrow 1$ (Tractability):\\
Let $\Gamma=(G, \{ U_i \}_{i\in N})$, where $G\in C$, be a colored hypergraphical game. We first construct $\red(\Gamma)$, which has a pure equilibrium if and only if $\Gamma$ has one. Then by the reduction in Lemma \ref{lem:tractable} we can obtain an instance $(G_2, H_2)$  of the homomorphism problem in which $G_2$ has bounded degree and bounded modulo-treewidth and whose answer is equivalent to whether $\red(\Gamma)$ has a PSNE. By Theorem~\ref{thm:grohe},
$(G_2, H_2)$ can be solved in polynomial time. We thus have
a polynomial-time algorithm for $\purechg(C,-)$.
\end{proof}
}%end fullver

\subsection{Hardness for graphical games}\label{sec:hardness}

%As a result of previous sections we can apply the  breakthrough result of Grohe \cite{grohe} (Theorem \ref{thm:grohe}) to the $\puregg$
%problem, and prove Theorem \ref{thm:complexity}.

%\begin{theorem}\cite{grohe}
%
%Assume $\textit{FPT}\neq \textit{W[1]}$. Then for every recursively enumerable class $C$ of colored hypergraphs with bounded edge size the following two statements are equivalent.
%
%\begin{enumerate}
%
%\item $\textit{HOM}^c(C, -)$ is in polynomial time.
%
%%\item $\textit{p-HOM}(C, -)$ is fixed-parameter tractable.
%
%\item $C$ has bounded modulo-treewidth.
%
%\end{enumerate}
%
%\label{thm:grohe}
%
%\end{theorem}

%
%\begin{theorem}
%
%Assume $\textit{FPT}\neq \textit{W[1]}$. Then for every recursively enumerable class $C$ of irreducible digraphs with bounded in-degree the following two statements are equivalent.
%
%\begin{enumerate}
%\item $\textit{pure-GG}(C, -)$ is in polynomial time.
%%\item $\textit{p-pure-GG}(C, -)$ is fixed-parameter tractable.
%\item %$C$ has bounded modulo-treewidth, i.e.\
%for every $G\in C$, $\hyper(G)$ has bounded modulo-treewidth.
%\end{enumerate}
%
%\end{theorem}
We first consider the hardness result for graphical games.
In Section \ref{sec:chghardness} we extend our approach to
colored hypergraph games.
\fullver{\begin{theorem}\label{thm:gghardness}
Assume $\textit{FPT}\neq \textit{W[1]}$. Then for every recursively enumerable class $C$ of irreducible digraphs with bounded in-degree,
if $\ppuregg(C,-)$ is fixed-parameter tractable, then
for every $G\in C$, $G$ has bounded treewidth.
\end{theorem}
}%end fullver

\fullver{First some remarks about our proof.
Usually, hardness reductions have the following outline: given an arbitrary instance of a problem that's known to be hard, construct in polynomial time an instance of the problem of interest (\puregg in our case).
NP-hardness reductions, including Gottlob \emph{et al.}'s proof \cite{Gottlob03} of NP-hardness of \puregg, have such structure.
In Appendix \ref{sec:directreduction}, we give a similar reduction that given an arbitrary instance of homomorphism problem, construct a graphical game such that there exists a homomorphism if and only if the graphical game has a pure equilibrium. Furthermore, modulo-treewidth of the left-hand graph is preserved.
Using this reduction, we can prove Theorem \ref{}, which considers $C_f$, the set of all bounded-indegree digraphs $G$ whose modulo-treewidth is at most $f(|V(G)|)$, and states that If $\puregg(C_{f}, -)$ is in polynomial time, then $f$ is a bounded function.
Although a novel result, this is weaker than the one we want to prove (Theorem \ref{thm:gghardness}).
Theorem \ref{thm:gghardness} characterizes the complexity of $\puregg(C, -)$ for
 an \emph{arbitrary} class $C$ of bounded-indegree irreducible digraphs.
The proof of Theorem \ref{} constructs graphical games with a particular structure, which is not always present in an arbitrary class $C$. Thus the approach in Appendix \ref{sec:directreduction}
is not sufficient in proving the hardness direction of Theorem \ref{thm:complexity}. }%end fullver

\shortver{As mentioned in the introduction, applying the hardness proof approach of \cite{Gottlob03} to our setting would create graphical games with a particular structure, which is not sufficient for our purpose because we want to characterized the complexity of $\puregg(C, -)$ given an \emph{arbitrary} class $C$.}%end shortver
We thus use a different construction in our proof of the hardness direction, which
starts with an arbitrary class $C$ of irreducible digraphs, constructs a bijective mapping to a class $C'$ of colored hypergraphs,
and then show that for any instance $(G,H)$ of $HOM(C',-)$ we can construct an equivalent
instance of $\puregg(C,-)$. We can then apply Theorem \ref{thm:grohe} to get the hardness result.

The key step of the proof is the following lemma.
Recall that given digraph $G$, $\hyper(G)$ is the colored hypergraph with hyperedge  $e_i$ (the edge that corresponds to vertex $i$ and its neighbors) being colored with color $i$.
\begin{lemma}\label{lem:gghardness}
Let $G$ be an irreducible digraph. Then for any colored hypergraph $H$,
there exists a graphical game $GG =$ \\
$\left(G, \{ U_i \}_{i\in N}\right)$ such that there is a
homomorphism from $\hyper(G)$ to $H$ if and only if $GG$ has a PSNE.
\end{lemma}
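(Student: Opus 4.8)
The plan is to build the graphical game $GG$ directly on $G$ so that its pure equilibria correspond to homomorphisms $\hyper(G)\to H$, running the reduction of Lemma~\ref{lem:tractable} ``in reverse''. First I would take the action set of every player $i$ to be a constant-size augmentation of $V(H)$; thus a pure profile assigns a vertex $v_i\in V(H)$ to each player, and the aim is to design the local utilities so that $i$ is best-responding exactly when the tuple $(v_i,v_{j_1},\dots,v_{j_r})$ built from $i$ and its in-neighbours $j_1<\dots<j_r$ — i.e.\ the image of the hyperedge $e_i$ of $\hyper(G)$ under $i\mapsto v_i$ — is an edge of $H$ of colour $i$. If this were achievable cleanly, a profile would be a pure equilibrium iff $i\mapsto v_i$ is a homomorphism.

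The obstacle, and the point at which irreducibility must be used, is the ``vacuous satisfaction'' phenomenon noted in the text: a player always has a best response, so if $i$'s in-neighbours are assigned values that do \emph{not} extend to any colour-$i$ edge of $H$, then \emph{every} action of $i$ is a best response and such a ``bad'' profile is a spurious equilibrium. To prevent this I would augment each action with a flag bit $t_i\in\{0,1\}$ (read as ``everything upstream of me, including me, is consistent'') and, for players lying on a certain cycle, a ``matching-pennies'' bit $p_i\in\{0,1\}$. Player $i$'s utility is defined by cases on data $i$ observes but does not control, namely (a)~whether its in-neighbours' values extend to a colour-$i$ edge of $H$ and (b)~whether all its in-neighbours have flag $1$: in the \emph{good} case, $i$ is rewarded for setting $t_i=1$ and picking a $v_i$ completing a colour-$i$ edge; in the \emph{bad} case, $i$ is rewarded, with higher priority, for setting $t_i=0$, and secondarily for playing its matching-pennies bit correctly. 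Since the case is a function of the in-neighbours' actions only, there is no circularity in this definition.

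To route any inconsistency to a place where it kills all equilibria I would use that $G$ is irreducible: every vertex has out-degree at least one, so picking one out-neighbour $\sigma(i)$ of each $i$ yields a function $\sigma\colon V\to V$ whose functional graph is a union of ``rho'' shapes, and following $\sigma$ from any vertex reaches a directed cycle of $G$ (necessarily of length at least $2$, since a loop $\sigma(c)=c$ would need the redundant edge $(c,c)$, excluded without loss of generality). On every such $\sigma$-cycle $c_1\to c_2\to\cdots\to c_L\to c_1$ I would install, in the bad case, matching-pennies constraints relating $p_{c_t}$ to $p_{c_{t-1}}$, choosing the one-step functions so that their composition around the cycle is negation (e.g.\ identity everywhere and negation once); these constraints are then jointly unsatisfiable whenever all $L$ players are in the bad case. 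The cascade is: if $t_j=0$ in a profile, then since $j$ is an in-neighbour of $\sigma(j)$, player $\sigma(j)$ is in the bad case and its best response has $t_{\sigma(j)}=0$; iterating, every vertex of the $\sigma$-cycle reached from $j$ has flag $0$ and is in the bad case, so not all of them can be best-responding — no equilibrium. Hence in any pure equilibrium all flags equal $1$, which (by best response in the good case, using that the in-neighbour values do extend) forces every $v_i$ to complete a colour-$i$ edge of $H$, i.e.\ $i\mapsto v_i$ is a homomorphism; conversely, given a homomorphism $h$, the profile $v_i=h(i)$, $t_i=1$, $p_i=0$ places every player in the good case playing a best response, and is thus a pure equilibrium.

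Turning this into a full proof mainly requires: writing the case-based utilities with explicit numerical payoffs so that ``set the flag correctly'' strictly dominates ``play matching pennies'' which strictly dominates ``pick a consistent value''; verifying formally that the matching-pennies composition argument kills equilibria on a cycle of any length $\ge 2$ (each violating player strictly gains by switching its bit); and the bookkeeping of $\sigma$ and its cycles. I expect that bookkeeping, together with getting the priority ordering of the payoff terms right, to be the fiddliest part, whereas the conceptual crux is simply recognising that some cyclic (matching-pennies) gadget is unavoidable and that irreducibility is precisely what guarantees such cycles exist and are reachable from every inconsistency.
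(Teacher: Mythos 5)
Your construction is correct and is essentially the paper's own proof: reward completion of a colour-$i$ hyperedge of $H$, force a ``failure'' signal otherwise, propagate it downstream along out-edges, and use irreducibility to land the failure on a directed cycle of length at least two carrying a generalized matching-pennies gadget (identity around the cycle with one negation), which kills all equilibria. The only cosmetic differences are that the paper replaces the $V(H)$-value by two failure actions $T,B$ rather than adjoining flag and pennies bits, and locates its cycles inside the terminal SCCs rather than via an out-neighbour selection function $\sigma$.
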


The reduction is outlined as follows. (We give a detailed proof of the lemma in Appendix \ref{sec:hardnessproof}.)
%For an arbitrary digraph $G\in C$, we use $\hyper(G)$ as the left-hand side of the homomorphism problem.
%Then given an arbitrary colored hypergraph $H$, construct a graphical game $\Gamma=(G,{U_i})$ such that there exists a homomorphism from  $\hyper(G)$ to $H$ if and only if $\Gamma$ has a pure equilibrium.
Each player's action set consists of $V(H)$ plus some ``failure actions'', in this case $T$ and $B$.
We define the utility for $i$, given a local strategy profile over $\nei(i)$, such that if the local strategy profile correspond to a hyperedge in $H$ of color $i$, then $i$ gets a high payoff (say 100), such that if there exists a homomorphism from  $\hyper(G)$ to $H$, then the corresponding strategy profile is a PSNE.

If the local strategy profile does not correspond to an edge of right color in $H$, we set the utilities so that player $i$ is forced to play one of the failure actions. This implies that out-neighbors of i are forced to play failure actions, and so on.
Now we just need to set utilities such that if at least one player plays failure actions, then no PSNE exists.
Recall that if $G$ was a DAG, then there always exists a PSNE; i.e. a game construction with no PSNE must contain
a cycle.
Fortunately $G$ is assumed to be irreducible,
which means that all of its terminal SCCs has a directed cycle of length at least 2. For each of the terminal SCCs, fix a cycle and set the utilities  of players on that cycle (given failure actions of their neighbors) to be a generalization of the Matching Pennies game: one of the players is incentivized to play the opposite failure action as his predecessor in the cycle, while all other players on the cycle are incentivized to imitate their predecessors.

If there is no homomorphism, then for any strategy profile, there must be one player forced to play failure actions, which implies that at least one terminal SCC play failure actions, which implies that one of these cycles are playing the generalized Matching Pennies game, which does not have a PSNE.

\shortver{Using Lemma \ref{lem:gghardness}, given an FPT algorithm for $\ppuregg(C,-)$ we can construct an FPT algorithm for $\textit{p-HOM}(C',-)$ where $C' = \{\hyper(G) | G\in \red(C)\}$.
This implies the hardness direction for graphical games.
}%end shortver

\fullver{
\begin{proof}[Proof of Theorem \ref{thm:gghardness}]

Suppose $\ppuregg(C,-)$ is fixed-parameter tractable.
By Lemma \ref{lem:fpt}, this implies that
$\ppuregg(\red(C),-)$ is fixed-parameter tractable.

Now we claim that $\textit{p-HOM}(C',-)$, where $C' = \{\hyper(G) | G\in \red(C)\}$, is in FPT.
This is because given a homomorphism problem instance $(\hyper(G), H)$ where $\hyper(G) \in C'$, Lemma \ref{lem:gghardness} states that the instance
can be answered by deciding the existence of PSNE of a game whose graph is $G$. Furthermore
we can uniquely determine the corresponding digraph $G\in \red(C)$
and then construct the graphical game as above in polynomial time. %\footnote{This is why we color $\hyper{G}$ in such a way:i.e.\ we have to ensure that the mapping from $G $ to $\hyper{G}$ is a bijection.}
%The resulting graphical game can be solved in poly time
Thus this is a valid fpt-reduction from $\textit{p-HOM}(C',-)$ to $\ppuregg(\red(C),-)$.
Since $G\in \red(C)$ and $\ppuregg(\red(C),-)$ is in FPT,
we have a FPT algorithm for $\textit{p-HOM}(C',-)$.

Hence, according to
Theorem~\ref{thm:grohe}, $C'$ must have bounded modulo-treewidth
which means  $\red(C)$ have bounded treewidth (by Lemmas \ref{lem:modtw} and \ref{lem:hypertw}).
\end{proof}
}%end fullver

\subsection{Hardness for colored hypergraphical games}\label{sec:chghardness}
To prove the hardness direction of Theorem \ref{thm:complexity}, it is sufficient to extend Lemma \ref{lem:gghardness} to colored hypergraphical games:
\begin{lemma}\label{lem:chghardness}
Let $G\in \Sigma$ be an irreducible colored hypergraph. Then for any colored hypergraph $H$,
there exists a colored hypergraphical game $\Gamma = (G, \{ U_c \}_{c\in \tau})$ such that there is a
homomorphism from $G$ to $H$ if and only if $\Gamma$ has a PSNE.
\end{lemma}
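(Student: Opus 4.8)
The plan is to adapt the construction used for Lemma \ref{lem:gghardness} to the colored-hypergraph setting, where the only real change is that several players may be forced to share a single utility table. First I would observe that since $G\in\Sigma$, each vertex $v$ is the first element of exactly one hyperedge $e_v$, and $e_v$ has a color $C(e_v)=c$ which determines the arity $\indeg_c$; thus assigning a utility function $U_c$ simultaneously fixes the local utility function of every player $v$ with $C(e_v)=c$. The action set of each player is $V(H)$ together with the two failure actions $T$ and $B$, exactly as before. The payoff $U_c$ is built so that: (i) if the tuple of actions on $e_v$ (all lying in $V(H)$) forms a hyperedge of $H$ of color $c$, the first player gets payoff $100$; (ii) otherwise the first player is forced into $\{T,B\}$; and (iii) on the failure actions we install, along one fixed directed cycle in each terminal SCC of $\dig(G)$, a generalized Matching-Pennies gadget with no pure equilibrium, and elsewhere on the failure actions we let a player always be able to best-respond.

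The key steps, in order, would be: (1) Take the induced digraph $\dig(G)$; since $G$ is irreducible, $\dig(G)$ is irreducible, so by the Lemma preceding Definition of \red, every terminal SCC of $\dig(G)$ has size $\ge 2$ and hence contains a directed cycle of length $\ge 2$; fix one such cycle per terminal SCC. (2) Define, for each color $c$, the table $U_c:[\,|V(H)|+2\,]^{\indeg_c}\to\mathds R$ by the three cases above; here I must check consistency — a color $c$ may be the color of many hyperedges, but the definition only refers to $H$, the color $c$, and the \emph{positions} within the tuple, so a single $U_c$ works for all of them, and the Matching-Pennies roles are assigned by position within the cycle, which is well-defined because each cycle edge corresponds to a unique (vertex, color, slot) triple. (3) Soundness: given a homomorphism $h:G\to H$, have every player $v$ play $h(v)$; then the actions on $e_v$ form the image $h(e_v)$, an $H$-hyperedge of the correct color $C(e_v)$, so $v$ gets $100$, the maximum, and deviating to $T,B$ or any other $H$-vertex cannot do better — a pure equilibrium. (4) Completeness: given a pure equilibrium $\mathbf s$, if no player plays a failure action then for each $v$ the tuple $s_{e_v}$ must be an $H$-hyperedge of color $C(e_v)$ (otherwise $v$ would be forced to a failure action, contradiction), so the map $v\mapsto s_v$ preserves every hyperedge and its color, i.e.\ it is a homomorphism $G\to H$; and I must rule out the case that some player plays a failure action, by showing the "forcing" propagates along out-edges of $\dig(G)$ down into a terminal SCC, so that the whole chosen cycle there is on failure actions and is therefore playing the generalized Matching-Pennies game, which has no pure equilibrium — contradiction.

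I expect the main obstacle to be step (2) together with the forcing-propagation half of step (4): I need the utility tables, which are now \emph{color-indexed} rather than vertex-indexed, to encode simultaneously (a) the "is this an $H$-edge of color $c$" check, (b) the rule that a player in failure mode whose relevant predecessor is \emph{not} in failure mode is itself forced into failure mode, and (c) the Matching-Pennies incentives, all while being a single well-defined function of the action tuple on a color-$c$ hyperedge — and I must verify that assigning the "one anti-imitating player per cycle" role is compatible with sharing tables across many hyperedges of the same color (it is, because that role is carried by the \emph{slot} in the tuple, not by the vertex identity, so at most one color/slot pair per terminal SCC cycle is designated anti-imitating, and a color's table is defined by its own cycle-membership data). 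Once the table definitions are pinned down, the soundness and completeness arguments are essentially the same as in the graphical-games proof of Lemma \ref{lem:gghardness}, so I would defer the full details to the appendix in parallel with that proof.
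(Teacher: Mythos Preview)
Your proposal has a genuine gap in step~(2), precisely at the point you flag as the main obstacle. You try to port the two-failure-action Matching--Pennies gadget from Lemma~\ref{lem:gghardness} and argue that the imitator/anti-imitator role can be carried by ``the slot in the tuple.'' But the role is a property of the \emph{first} coordinate (the player herself), not of a neighbor slot: on a fixed cycle one designated player must prefer to mismatch her predecessor while all others prefer to match. If two cycle players share a color, they share $U_c$, and $U_c$ sees only the action tuple --- it cannot tell which vertex is in the first slot. Concretely, take $G$ to be a $2$-cycle $v_0\leftrightarrow v_1$ with $e_{v_0}=(v_0,v_1)$ and $e_{v_1}=(v_1,v_0)$ both of the \emph{same} color $c$ (this is a legal irreducible $G\in\Sigma$). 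Then $U_c$ is a single function of two actions; whatever you choose, both players have identical preferences over $(T,B)$-profiles, so either $(T,T)$ or $(T,B)$ is a pure equilibrium of the failure subgame, and the ``only if'' direction collapses. The paper states this explicitly: ``we would not be able to express the generalized Matching Pennies game now: in the worst case all players may have the same color.''

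The paper therefore abandons Matching Pennies altogether and uses a \emph{player-symmetric} failure gadget that can be written once per color. There are $2n+1$ failure actions $1,\ldots,2n+1$; when the action tuple is not an $H$-edge of color $c$, the best response is $1$ if no neighbor plays a failure action, and otherwise $f_{\max}+1$ (or $n+1$ when $f_{\max}=2n+1$), where $f_{\max}$ is the largest failure action among the neighbors. This rule depends only on the multiset of neighbor actions, so it is well defined as a single $U_c$. The best-response graph on failure actions is $\rho$-shaped (a tail $1,\ldots,n$ feeding a cycle $n{+}1,\ldots,2n{+}1$), and the ``only if'' argument is then a more delicate active-edge/earliest-nontrivial-SCC analysis (Lemma~\ref{lem:stronglyconnected} and the paragraphs following it), not the simple propagate-to-a-terminal-cycle argument you sketch in step~(4). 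Your outline for steps~(1), (3) and the homomorphism direction of~(4) is fine; what is missing is a failure-action subgame that is both color-invariant and has no PSNE on every irreducible induced digraph, together with the corresponding no-equilibrium argument.
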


We sketch a proof of the lemma in this section.
At a high level, the main difficulty when extending our proof of Lemma \ref{lem:gghardness} to colored hypergraphical games is that players with the same color must have the same utility function. Instead of being able to specify the utility function for each player in the graphical game case, now we need to define one utility function $U_c$ for each color $c$.
In fact, our generalized Matching Pennies construction for the graphical game case cannot be directly applied to colored hypergraphical game, and our proof of Lemma \ref{lem:chghardness} instead uses a different construction involving $2n+1$ failure actions for each player.

Part of the hardness proof for graphical games can be relatively easily adapted to colored hypergrahical games:
each player's action set still consists of $V(H)$ plus some failure actions (to be specified later).
We set the utility function $U_c$ so that if the input action tuple corresponds to a hyperedge of color $c$ in $H$, then the utility is 100. This will ensure that if there exists a homomorphism, then the corresponding strategy profile is a PSNE. This concludes the proof of the ``if'' direction of Lemma \ref{lem:chghardness}.

The ``only if'' direction is more difficult. In particular, it is difficult to define the utilities for the failure actions in a way that respects the color constraints.
For one, we would not be able to express the generalized Matching Pennies game now: in the worst case all players may have the same color.
Also, we cannot specify a cycle and then define utility functions on that cycle in a way that ignore all edges not in the cycle: this would also require player-specific utility functions.

Thus we want the utilities given failure actions to not depend on the player.
For the simple case of a single cycle, the following construction is sufficient. (For notational convenience, we only specify the best response function $BR$, which maps a tuple of actions of the neighbors to a single action as the best response. Given the $BR$ function the utilities can be defined straightforwardly.)
\begin{lemma}
Given a colored hypergraph $G$, whose induced digraph consists of just one cycle with length $n$, the following
colored hypergraphical game on $G$ does not have PSNE:
\begin{itemize}
\item each player's actions are the integers $0,\ldots,p-1$;
\item let $BR(a) = (a+1) \mod p$ where $p\geq n+1$.
\end{itemize}
\end{lemma}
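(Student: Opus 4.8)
The plan is a short counting argument around the cycle, by contradiction. First I would fix notation: label the $n$ players of the cycle $v_0,v_1,\ldots,v_{n-1}$ so that the induced digraph $\dig(G)$ contains exactly the arcs $(v_i,v_{(i+1)\bmod n})$; thus the only neighbor of $v_{(i+1)\bmod n}$ other than itself is $v_i$, and the hyperedge $e_{v_{(i+1)\bmod n}}$ has arity $2$. Before running the argument I would check that the object described is a legitimate colored hypergraphical game in the sense of Definition~\ref{def:chg}: every hyperedge has arity $2$, so the ``same arity per color'' requirement holds no matter how $G$'s edges are colored (in particular even if all edges share a single color); and prescribing the best-response map $BR(a)=(a+1)\bmod p$ does pin down an admissible utility function, e.g.\ $U_c(b,a)=1$ if $b=(a+1)\bmod p$ and $U_c(b,a)=0$ otherwise, whose unique maximizer over $b\in\{0,\ldots,p-1\}$ is $(a+1)\bmod p$; since the same $U_c$ is assigned to every color, the color constraints are trivially respected.

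Next I would suppose for contradiction that $\mathbf{s}$ is a pure Nash equilibrium and let $s_i\in\{0,\ldots,p-1\}$ be the action of $v_i$. Since $v_{(i+1)\bmod n}$'s sole neighbor is $v_i$, the equilibrium condition for $v_{(i+1)\bmod n}$ reduces to $s_{(i+1)\bmod n}=BR(s_i)=(s_i+1)\bmod p$ for every $i$. Composing these $n$ relations around the cycle yields $s_0=(s_0+n)\bmod p$, i.e.\ $n\equiv 0\pmod p$.

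Finally, the hypothesis $p\geq n+1$ together with $n\geq 1$ gives $1\leq n<p$, so $p$ does not divide $n$, contradicting $n\equiv 0\pmod p$; hence the game has no pure Nash equilibrium. I do not expect any real obstacle in this lemma: the content is just the observation that best response rotates a player's action by $+1$, so a self-consistent assignment around a length-$n$ cycle forces $p\mid n$, which $p\geq n+1$ forbids. The only step needing a little care is the bookkeeping that the described game is a well-formed colored hypergraphical game and that specifying $BR$ alone legitimately determines it, which the lemma's parenthetical remark anticipates and which I sketched above.
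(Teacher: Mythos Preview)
Your argument is correct and is exactly the straightforward proof the paper has in mind; indeed the paper explicitly writes ``We omit the straightforward proof'' and only remarks that the digraph on actions induced by $BR$ is a $p$-cycle, which is precisely the observation you exploit when composing the $n$ best-response relations to obtain $p\mid n$.
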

We omit the straightforward proof. If we think of $BR$ as arcs from $a$ to $BR(a)$, then the digraph on actions form a $p$-cycle.

This can be extended to strongly connected digraphs, by the the following construction:
\begin{lemma}\label{lem:stronglyconnected}
Given a colored hypergraph $G$, whose induced digraph is strongly connected, the following
colored hypergraphical game on $G$ does not have PSNE:
\begin{itemize}
\item
each player's actions are the integers $0,\ldots,n$;
\item given neighbors' actions $( s_1,\ldots,s_m )$, let $BR( s_1,\ldots,s_m ) = (\max\{s_1,\ldots,s_m\} +1) \mod(n+1) $.
\end{itemize}
\end{lemma}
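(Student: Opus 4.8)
The plan is to suppose, toward a contradiction, that the stated game has a pure Nash equilibrium $(a_i)_{i\in V(G)}$ with each $a_i\in\{0,\ldots,n\}$ (here $n=|V(G)|$, matching the single-cycle case), and to extract an impossible congruence modulo $n+1$. First I would record the local equilibrium condition in a usable form. For a profile $(a_i)$ put $M_i:=\max\{a_j : (j,i)\in\dig(G)\}$, the largest action played by an in-neighbour of $i$; then the profile is a PSNE exactly when $a_i=(M_i+1)\bmod(n+1)$ for every $i$. From this I extract two facts: (i) since $M_i$ is a maximum over the (nonempty) set of in-neighbours of $i$, it is attained, so some $j$ with $(j,i)\in\dig(G)$ has $a_j=M_i$; and (ii) whether $a_i\ge 1$ (in which case $M_i=a_i-1$) or $a_i=0$ (in which case $M_i=n$), we always have $M_i\equiv a_i-1\pmod{n+1}$ — the ``wrap-around'' step still decreases the value by $1$ modulo $n+1$.

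Next I would build a backward walk and iterate (i)+(ii). Because $\dig(G)$ is strongly connected on $n\ge 2$ vertices, every vertex has at least one in-neighbour, so from an arbitrary $j_0$ I can recursively pick $j_{t+1}$ with $(j_{t+1},j_t)\in\dig(G)$ and $a_{j_{t+1}}=M_{j_t}$, yielding an infinite walk $\cdots\to j_2\to j_1\to j_0$ in $\dig(G)$. By (ii), $a_{j_{t+1}}=M_{j_t}\equiv a_{j_t}-1\pmod{n+1}$, hence $a_{j_t}\equiv a_{j_0}-t\pmod{n+1}$ for all $t\ge 0$. Now apply the pigeonhole principle to $j_0,j_1,\ldots,j_n$: these are $n+1$ vertices drawn from a set of size $n$, so $j_s=j_t$ for some $0\le s<t\le n$. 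Then $a_{j_s}=a_{j_t}$ forces $t\equiv s\pmod{n+1}$, i.e.\ $(n+1)\mid(t-s)$, contradicting $1\le t-s\le n$. Hence no PSNE exists. Finally I would note that this indeed defines a colored hypergraphical game on $G$: the prescribed best response depends only on the (max of the) neighbours' actions, symmetrically in them, so for each color $c$ one can take a single utility function $U_c$ on $[m]^{\mathcal{I}_c}$ — e.g.\ payoff $1$ to the action $(\max\{s_1,\ldots,s_m\}+1)\bmod(n+1)$ and $0$ otherwise — realizing the stated $BR$ for every player of that color.

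The only real obstacle is making sure the backward walk never gets stuck, i.e.\ that every vertex of $\dig(G)$ has in-degree at least one; this is precisely where strong connectivity is invoked (for $n\ge 2$; the degenerate single-vertex digraph is only strongly connected if it carries a self-loop, and then the same argument applies, $j_0=j_1$). Beyond that, the proof is just bookkeeping with arithmetic modulo $n+1$, the one point to state carefully being the uniform congruence in (ii) that lets the relation $a_{j_t}\equiv a_{j_0}-t$ wrap cleanly around $0\mapsto n$. Note also that the bound $p=n+1$ is exactly tight here: the contradiction $(n+1)\mid(t-s)$ with $t-s\le n$ would fail if we used a smaller modulus, which is consistent with the $p\ge n+1$ hypothesis of the single-cycle lemma.
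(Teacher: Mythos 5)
Your proof is correct and follows essentially the same route as the paper: your backward walk along max-achieving in-neighbours is exactly a traversal of what the paper calls ``active'' edges, your pigeonhole step is the paper's observation that the active subgraph has no source and hence contains a cycle, and the final contradiction via $a_{j_t}\equiv a_{j_0}-t \pmod{n+1}$ on a cycle of length at most $n$ is the same modular-arithmetic argument the paper invokes from its single-cycle lemma.
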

The intuition is that for each strategy profile, at least one neighbor is "activated" in the following sense:
Given digraph $G=(V,E)$, strat profile $\mathbf{s}$, we say an edge $(u,v)\in E$ is active if $$u\in \arg\max_{u': (u,v)\in E}    s_{u'},$$ i.e. $u$'s action under $\mathbf{s}$ is maximal among $v$'s neighbors.
Let $G'$ be the subgraph of $G$ where we only keep the active edges, i.e. for each player $i$,  only keep the edge from the neighbor playing the highest action among neighbors.
We claim that $G'$ must contain a cycle, i.e. is not a DAG.
This is because $G$ is strongly connected, which means it has  no source, i.e. all vertices of $G$ have positive number of incoming edges. This implies that all vertices of $G'$ have positive number of incoming edges, i.e. $G'$ has no source. Therefore $G'$ is not a DAG.

Since $G'$ must contain a cycle, on that cycle $BR(a)= (a+1)\mod(n+1)$, which implies that at least one player on that cycle is not playing a best response. Therefore $\mathbf{s}$ must not be a PSNE.

%-- extend: given a digraph with no sink
The above construction does not directly work for the general case of digraphs with no sinks: now $G'$ could be a DAG.
It turns out that we can indeed fix the construction to work for all digraphs with no sinks. At a high level, instead of forming a best-response cycle with the actions, we form a $\rho$ shape with a cycle and a tail.

We now complete the specification of the utility functions for our construction for Lemma \ref{lem:chghardness}.
The failure actions are $1,\ldots,2n+1$.
If the input action tuple of $U_c$ does not correspond to a hyperedge in $H$ with the same color $c$, then:
\begin{itemize}
\item if no neighbors are playing failure actions, then utility of playing failure action 1 is +1, all others -100;
\item otherwise, let $f_{\text{max}}$ be the max failure action of neighbors. If $f_{\text{max}}<2n+1$, then $BR=f_{\text{max}}+1$; otherwise (i.e. $f_{\text{max}}=2n+1$), let $BR=n+1$.
\end{itemize}

We now sketch the ``only if'' direction of Lemma \ref{lem:chghardness}. If there is no homomorphism, then for any strategy profile, some player must play failure actions, which implies that some SCC (and all SCCs reachable form there) must play failure; the other SCCs must not play failure actions.
Given a strategy profile, consider the ``earliest reached" non-singleton SCC, as defined by the following process:
 go through SCCs in topological order, in the direction of the edges; return the first non-singleton SCC whose players choose failure actions.
All earlier SCCs are either not playing failure actions, or a singleton SCC that is playing failure action $a<=n$.

Given strategy profile $\mathbf{s}$,
consider the graph $G_f$, which is the subgraph of $G$ restricted over the earliest non-singleton SCC and all earlier singleton SCCs playing failure actions:
%-- remove edges from non-max neighbors ->G'_f

First, we claim that if one player $j$ in the non-singleton SCC is playing an action less or equal to $n$, then $\mathbf{s}$ must not be a PSNE. This is because for such an action $b\leq n$ to be a best response, all incoming neighbors
within the SCC must be playing even lesser failure actions. If we iteratively follow an incoming neighbor within the SCC, thus with decreasing actions, we either encounter $j$ again, with action less than $b$, a contradiction, or a player $k$ playing action $1$. For 1 to be a BR, all neighbors must not play failure actions; but $k$ is in a non-singleton SCC with all players playing failure actions, so there must be at least one neighbor playing failure actions, a contradiction.

Therefore, in order for $\mathbf{s}$ to be an PSNE, all players in the non-singleton SCC must play failure actions greater than $n$. We claim that if this is the case, then each vertex in the non-trivial SCC must have an active neighbor in the same SCC. This is because
if an edge from a singleton SCC in $ G_f $ to a player $i$ the non-singleton SCC in $G_f$ is active, then because the player in the singleton SCC is playing some action $a<=n$, the target player $i$ in the non-singleton SCC must have an inactive neighbor in that SCC, i.e. some player $j$ in the non-singleton SCC is playing $b<a\leq n$. We have argued above that this would contradict with $\mathbf{s}$ being a PSNE.
Therefore, each vertex in the non-trivial SCC must have an active neighbor in the same SCC.
By the same argument as for the strongly connected digraph case, there must exist an active cycle in the SCC. Since every player on that active cycle is playing an action greater than $n$, they are playing a shifted version of the game in Lemma \ref{lem:stronglyconnected}.
This implies that $\mathbf{s}$ cannot be a PSNE.
%This concludes our proof sketch for Lemma \ref{lem:chghardness}.

%\subsection{Reducible Graphs}
\label{sec:reducible}

\hide{note:Maybe we don't need to include the subsection "Solving bounded homomorphism when tree-width is large" in this paper. This result is very similar to one of Grohe's results; and likely can be proven
in the same way, without having to use graphical games.

%\subsection{Solving bounded homomorphism when tree-width is large}
One interesting result of Daskalakis  and Papadimitriou~\cite{Daskalakis06} is solving $\puregg(G,-)$ in the special case that the number of actions of each player is bounded.

\begin{theorem}\cite{Daskalakis06}
Deciding whether a graphical game has a pure Nash
equilibrium is in P for all classes of games with primal graphs of treewidth $O(\log n)$,
and bounded number of strategies per player.
Moreover, computing a succinct description of all pure Nash equilibria
can be done in polynomial time.
\label{thm:Daskalakis06}
\end{theorem}
%Notice that, by neighborhood of a vertex,  they actually mean the neighbors of that vertex in the undirected version of $G$,
%i.e. the union of in-neighbors and out-neighbors of that vertex in $G$. This is different from  our notion of neighborhood. In ours, we take all the vertices $u$ that have an incoming edge to $v$ as neighbors of $v$.
Daskalakis  and Papadimitriou~\cite{Daskalakis06} defined graphical games
based on undirected graphs. A vertex's neighborhood in the undirected version of a directed graph  would include the \emph{outgoing neighbors}, i.e. vertices
at the end of with outgoing edges as well as vertices with incoming edges.
As a result, given a class of bounded-indegree directed graphs, their corresponding undirected graphs don't always have bounded neighborhood size.
Nevertheless,
since
Theorem \label{thm:Daskalakis06} and its associated algorithm \cite{Daskalakis06} operates directly on the primal graph,
it still holds true for graphical games defined on directed graphs.

Using this result and also Theorem \ref{thm:main} we can solve some special cases of the homomorphism problem when the tree-width is not constant.

\begin{theorem}
%Let $(G, H)$ be an instance of homomorphism problem in which
Let $\mathcal{C},\mathcal{D}$ be two recursively enumerable classes of colored hypergraphs with the same set of colors $\tau$ such that
\begin{enumerate}
\item The sizes of edges in $\mathcal{C}$ is bounded by a constant. %and each vertex  in $G$ belongs to a constant number of edges.
\item For every $H\in \mathcal{D}$, for every color $c\in\tau$, a constant number of edges in $H$ have color $c$.
\item The tree-width of $\mathcal{C}$ is in $O(\log n)$
\end{enumerate}
Then we can efficiently solve $\textit{HOM}(\mathcal{C}, \mathcal{D})$.
\end{theorem}
\begin{proof}
We follow the reduction $1\rightarrow 2$ in Theorem~\ref{thm:main} with some minor changes in the strategies.
Let the computed graphical game be $GG'=(G', \{ U_i \}_{i\in N})$.
Let $u\in G$ be a vertex and $e$ be an edge such that $u\in e$. Since the number of edges in $H$ that have the same color as $e$ are bounded, there are bounded mapping possibilities for the player  corresponding to $u$. In the game $GG'$ this player  chooses,as his action,  only a vertex that belongs to an edge in $H$ with the same color as $e$. Therefore, the size of the action set of every player is bounded.
The rest of $GG$ is identical to the one constructed in reduction $1\rightarrow 2$.
As for the degrees,
\begin{itemize}
\item The in-degree of every vertex $u\in G'$ (other than $x(e)$ and $y(e)$'s) is zero. %and its out-degree is the number of edges $e\in G$ that contain $u$  which is bounded.
\item The in-degree of every vertex $x(e)$($y(e)$) is the size of $e$ (which is bounded). %and its out-degree is zero.
\end{itemize}
An argument similar to the proof of ~\ref{thm:treewidth} shows that $$\textit{tree-width}(\hyper(G'))\leq \textit{tree-width} + 2.$$ Therefore, $GG'$ satisfies all the requirements of Theorem~\ref{thm:Daskalakis06} and, hence, we can find whether it has a PSNE in polynomial time.
\end{proof}

}%end note

\section{Discussion and Future Work}
%In this paper, we completely characterize the complexity of finding pure Nash equilibrium in graphical games $(G, \{ U_i \}_{i\in N})$ based on the structure of $G$ with the assumption that $G$ has constant in-degree.

Our results can be understood as establishing an equivalence between  PSNE problems and homomorphism problems. Such a equivalence relation is closer than the kind of equivalence between two NP-complete problems:
we are in fact showing a family of equivalences, between $\purechg(C,-)$ for an arbitrary class $C$ and $\textit{HOM}(\red(C),-)$. On the other hand, our results also show that there are certain differences between
the two problems:  because
in a graphical/colored hypergraphical game each player has at least one best response regardless of her neighbors' actions,  we can iteratively remove sinks without affecting the answer,
whereas the same does not hold for homomorphism problems in general.

We have focused on the decision problem on the existence of pure Nash equilibria. Related problems include counting the number of pure Nash equilibria and finding one such equilibrium if one exists. On the homomorphism problem side, Dalmau and Jonsson \cite{Dalmau04counting} gave a characterization of the complexity of the counting version of
$\textit{HOM}(C,-)$, while the characterization for the construction problem is still open.
An interesting direction is to adapt our reductions to the counting and construction versions of these problems, as well as to cases with unbounded in-degree.
Another direction is to use similar techniques to prove  characterizations for other game representations such as action graph games~\cite{ActionGraph,JiangKLB06AGG}.

\begin{small}
\bibliographystyle{abbrv}
%\blibliographystyle{splncs}
\bibliography{AGGPure}
\end{small}

\appendix
\section{Proof of Lemma 13} \label{sec:fptproof}  %\ref{lem:fpt}
\begin{proof}
\fullver{\note{Should we write this as a formal fpt-reduction?} }
Given class $\mathcal{C}$ with in-degree bounded by $\indeg$, and an FPT algorithm for $\ppuregg(\mathcal{C},-)$, we now construct a fixed-parameter tractable algorithm for $\ppuregg(\red(\mathcal{C}),-)$.
Given $G' \in \red(\mathcal{C})$, we run the following algorithm:

\begin{enumerate}
\item Enumerate the class $\mathcal{C}$ until we find a graph $G\in \mathcal{C}$ such that $G'=\red(G)$.
\item Run algorithm for $\ppuregg(\mathcal{C},-)$ on $G$.
\end{enumerate}

We claim that Step 1 always terminates and its running time is bounded by a computable function on the size of $G'$.
This is because the class $C$ is recursively enumerable, and because
by definition, for each $G' \in \red(\mathcal{C})$,
there exists a graph $G\in \mathcal{C}$ such that $G'=\red(G)$.
%\note{It is easy to see that step 1 terminates in finite time;
%but how do we show that the running time is bounded by a computable
%function of $|G'|$? Is it true that if the step can expressed as
%a single Turing Machine, then its running time is bounded by a
%computable function?}
Therefore we have a fixed-parameter tractable algorithm
for $\ppuregg(\red(\mathcal{C}),-)$.
\end{proof}

\fullver{
\section{Proof of a weaker hardness result using direct reduction}\label{sec:directreduction}
%In this section we show that the homomorphism problem between hypergraphs is equivalent to the pure Nash equilibrium problem in graphical games. Moreover, we show that the reductions preserve edge size and modulo-treewidth bounds.

%\begin{theorem}
%The following problems are polynomial-time equivalent.
%\begin{enumerate}
%\item Finding homomorphism between colored hypergraphs.
%%\item Finding homomorphism between uncolored hypergraphs.
%\item Finding pure NE of graphical games.
%\end{enumerate}
%\label{thm:main}
%\end{theorem}
%

\begin{theorem}
Given an arbitrary instance $(G, H)$ of the homomorphism problem,
 a graphical game $\Gamma$ can be constructed in polynomial time such that there is a homomorphism from $G$ to $H$ if and only if $\Gamma$ has a PSNE.
\end{theorem}
\begin{proof}

%We show the equivalence by reducing problems to each other in the order $2\rightarrow 1\rightarrow 3\rightarrow 2$.

%$2\rightarrow 1$:\\

%It's trivial as $2$ is a special case of $1$.

Let $(G, H)$ be an instance of the  homomorphism problem. Here both $G$ and $H$ are colored hypergraphs. The homomorphism problem asks whether there exists a mapping $h$ from vertices of $G$ to $H$ such that for every edge $e=(v_1, v_2, \cdots, v_k)$ of $G$, $h(e) = (h(v_1), h(v_2), \cdots, h(v_k))$ is an edge in $H$  with the same color as $e$. We show how to solve this by finding  PSNE of some graphical games.

In the game that we construct there are $n = |V(G)|$ players corresponding to vertices of $G$ and each player chooses, as his action, a vertex of $H$ to map to. Each player gets a constant utility, say $1$, regardless of his action.

In addition we create two vertices $x(e)$ and $y(e)$ for every edge $e$ in $G$. The utility of both $x(e)$ and $y(e)$ depends on $x(e)$, $y(e)$ and all the vertices in $e$. The action of both $x(e)$ and $y(e)$ is either $g$ (for good) or $b$ (for bad). Their utilities are defined as follows. Let $C(e)$ be a boolean value indicating whether the actions of players in $e$ correspond to an edge with the same color in $H$ (i.e. a correct mapping for $e$).

\begin{itemize}

\item $x(e)$ gets $1$ iff the triple $(C(e), x(e), y(e))$ is either $(true, g, g)$ or $(false, g, g)$ or $(false, b, b)$ and gets zero otherwise.

\item $y(e)$ gets $1$ iff the triple $(C(e), x(e), y(e))$ is either $(true, g, g)$ or $(false, b, g)$ or $(false, g, b)$ and gets zero otherwise.

\end{itemize}

If $C(e)$ is false then $x(e)$ and $y(e)$ would have opposite interests and, hence, the game won't have any PSNE. Therefore, the game has a PSNE if and only if there exists a homomorphism from $G$ to $H$. Notice that if there is a proper mapping $h$ from $G$ to $H$ then the players can obviously play according to $h$ and have a PSNE.

\end{proof}

%\subsection{Tree-width modulo homomorphism}
In the following lemma we show that both maximum edge size and modulo-treewidth are preserved for the left-side hypergraph in the proof of Theorem~\ref{thm:main}.
%We first show that both degree and modulo-treewidth bounds are preserved under the reductions in Theorem~\ref{thm:main}

\begin{lemma}

In the proof of Theorem~\ref{thm:main} the maximum edge size and module-treewidth bounds are preserved.
\label{thm:treewidth}
\end{lemma}

\begin{proof}

Assume $G$ has maximum edge size $r$ and modulo-treewidth $w$.

Let $GG' = (G', \{U_i\}_{i\in N} )$ be the  graphical game obtained by reduction $1\rightarrow 2$ in the proof of Theorem~\ref{thm:main}. Notice that for every edge $e\in G$, $\hyper(G')$ has an edge $e' = e\cup \{x(e), y(e)\}$.
Let $G_1$ be a graph that is homomorphically equivalent to $G$ with  the minimum treewidth, $w$, and $(T, W)$ be its optimal tree-decomposition. Furthermore, let $h$ be a homomorphism from $G$ to $G_1$.

We build a hypergraph $G'_1$ with $$V(G'_1) = V(G_1) \bigcup_{e\in \hyper(G)} \{x(e), y(e)\}$$ and for every edge $e$ in $G$ we put an edge $e'$ in $G'_1$ whose vertices are $h(e)\cup \{x(e), y(e)\}$.

 It's clear that $G'_1$ is a homomorphism of $\hyper(G')$ ($x(e)$ maps to $x(e)$,  $y(e)$ maps to $y(e)$ and  the rest of vertices follow the mapping $h$). For every edge $e\in G$ vertices in $e$ make a clique in $\pri(G)$, and hence $h(e)$ is a clique in $\pri(G_1)$. So, by Lemma~\ref{lemma:kloks}, there must be a node $i$ in $T$ such that $h(e)\subset W_i$. Now we can attach a leaf $i'$ to $i$ and set $W_{i'} = W_i\cup \{x(e), y(e)\}$. Obviously this would be a tree-decomposition for $G'_1$ and its tree-width is at most $w+2$. It's also easy to verify that the maximum edge size  of $\hyper(G')$ is $r+2$.
\end{proof}

Using the above reduction, we can show the following, which is a
novel result but is weaker than the hardness result proved in
Section \ref{sec:hardness}:
\begin{theorem}
Assume $\textit{FPT}\neq\textit{W[1]}$.
Let $f: \mathds{N}\rightarrow\mathds{N}$ be a non-decreasing function. Define
$C_{f}$ to be  the set of all bounded-indegree digraphs $G$ whose induced hypergraphs have
modulo-treewidth at most $f(|V(G)|)$.
If $\puregg(C_{f}, -)$ is in polynomial time, then $f$ is a bounded function.
\end{theorem}
\begin{proof}
Suppose $\puregg(C_{f}, -)$ is in polynomial time. We construct a polynomial time algorithm for $\textit{HOM}(\hyper(C_{f'}), -)$, where $f'(x)\equiv f(x)-2$, as follows.
Given an instance $(G,H)$ of $\textit{HOM}(\hyper(C_{f'}), -)$,
use the above reduction to create a graphical game $(G',\{U_i\})$ with
modulo-treewidth bounded by $f'(|V(G)|)+2$. Since $|V(G)|<|V(G')|$,
this is bounded by $f(V(G')$. We can then apply the polynomial
time algorithm for $\puregg(C_{f}, -)$.
Therefore $\textit{HOM}(\hyper(C_{f'}), -)$ is in polynomial time, which then implies
that $\hyper(C_{f'})$ has bounded modulo-treewidth, due to Theorem \ref{thm:grohe}.
This implies that $f$ is a bounded function.
\end{proof}

}%end fullver

\section{Proof of Lemma 19}\label{sec:hardnessproof} %\ref{lem:gghardness}

\begin{proof}
Let $G$'s terminal SCCs be $\pi_1,\ldots,\pi_m$.
For each $\pi_j$,
fix a cycle $i^j_{0}\rightarrow\ldots\rightarrow i^j_{r_j}\rightarrow i^j_{0}$. This is always possible since $G$ is
irreducible and, hence, every terminal SCC has at least two vertices. These cycles are disjoint since the terminal SCCs are maximal SCCs.

The graphical game $GG$ is constructed as follows.
\begin{itemize}
\item Each player $i$'s action set is $V(H) \cup T \cup B$. $T$ and $B$ are the ``failure actions''.

\item Player $i$'s utility: by the definition of graphical game, his utility depends on the actions chosen by him and his neighbors.
Let $p_i\in \prod_{j\in  \nei(i)} S_j$ be the tuple of
actions chosen by $i$ and its neighbors.
\begin{enumerate}
\item If $p_i$ corresponds to a hyperedge in H with the same color as $e_i$ (the edge corresponding to $\nei(i)$), then $i$'s utility is 100.
\item Otherwise:
\begin{enumerate}
\item If $i$ is not playing one of the failure actions $T$ or $B$, then $i$ gets $-100$.

%\item if there are more than one player playing failure actions in $p_i$, and $i$ is the one with the smallest index, and $j$ is the second smallest, then $i$ gets 1 if its action matches $j$, and $-1$ otherwise.

%\item if there are more than one player playing failure actions in $p_i$, and $i$ is the one with the second smallest index, and $j$ is the smallest, then $i$ gets $-1$ if its action matches $j$, and $1$ otherwise.

\item Otherwise, if $i = i^j_{k}$ in one of the pre-defined cycles:
\begin{enumerate}
\item If $k>0$, then $i=i^j_{k}$'s payoff depends only on the actions
of herself and $i^j_{k-1}$. If $i^j_{k-1}$ plays other than $T$ or
$B$, $i$ gets 0 by playing either $T$ or $B$. Otherwise, $i$ gets 1
if both she and $i^j_{k-1}$ plays $T$ or both plays $B$, and -1
otherwise.
\item If $k=0$, then $i=i^j_{0}$'s payoff depends only on the actions
of herself and $i^j_{r_j}$. If $i^j_{r_j}$ plays other than $T$ or
$B$, $i$ gets 0 by playing either $T$ or $B$. Otherwise, $i$ gets -1
if both she and $i^j_{r_j}$ plays $T$ or both plays $B$, and 1
otherwise.
\end{enumerate}
\item Otherwise, $i$ gets 0 (by playing either $T$ or $B$).
\end{enumerate}
\end{enumerate}
\end{itemize}
We claim that this graphical game has a PSNE if and only if there
is a homomorphism from $\hyper{G}$ to $H$.

{\bf if part}: if there exists a homomorphism from $\hyper{G}$ to $H$ whose mapping function is $h$, then in the graphical game, the strategy in
which each player $i$ plays $h(i)$ is a Nash equilibrium.

{\bf only if part}: A PSNE where everyone gets 100 corresponds to
a homomorphism. Furthermore, the only PSNE of the graphical game
are ones where every player gets 100. This is because if some player
$i$ fails to get 100, then he has to play $T$ or $B$ to avoid the
-100 penalty. This makes all his (outgoing) neighbors fail to get 100 as well,
so they also have to play $T$ or $B$. $i$ is either part of a terminal $\pi_j$ or there is a path to a vertex in a terminal $\pi_j$. Since $i$ plays failure actions all players in $\pi_j$ must play failure actions as well.
Therefore the pre-defined cycle $i^j_{0}\rightarrow\ldots\rightarrow i^j_{r_j}\rightarrow i^j_{0}$
play failure actions. The utilities are set up so that the players
in this cycle are playing a game similar to Matching Pennies, and it
is straightforward to verify that there is no PSNE if they play
the failure actions. Therefore there's no PSNE unless everyone
gets 100.
\end{proof}

%\section{Proof Sketch of Lemma \ref{lem:chghardness}}\label{sec:chgproof}

\end{document}